\def\ps@headings{%
\def\@oddhead{\mbox{}\scriptsize\rightmark \hfil \thepage}%
\def\@evenhead{\scriptsize\thepage \hfil \leftmark\mbox{}}%
\def\@oddfoot{}%
\def\@evenfoot{}}
\newif\if@restonecol
\def\old@comma{,}
    \old@comma\discretionary{}{}{}%
\newtheorem{theorem}{Theorem}
\newtheorem{corollary}[theorem]{Corollary}
\newtheorem{claim}[theorem]{Claim}
\newtheorem{definition}[theorem]{Definition}
\DeclareMathAlphabet{\mathcal}{OMS}{cmsy}{m}{n}
\begin{document}
\pagestyle{fancy}
\IEEEoverridecommandlockouts

%
%

\lhead{\textit{Technical Report, IBM T. J. Watson Research Center, Yorktown, NY, USA, October, 2017.}}
\rhead{} 

\title{Influence Maximization Under Generic Threshold-based Non-submodular Model}
\author{\IEEEauthorblockN{Liang Ma}
\IEEEauthorblockA{IBM T. J. Watson Research\\
Yorktown, NY, USA\\
Email: maliang@us.ibm.com}
}

%
%

\maketitle
\IEEEpeerreviewmaketitle

\begin{abstract}
As a widely observable social effect, influence diffusion refers to a process where innovations, trends, awareness, etc. spread across the network via the social impact among individuals. Motivated by such social effect, the concept of influence maximization is coined, where the goal is to select a bounded number of the most influential nodes (seed nodes) from a social network so that they can jointly trigger the maximal influence diffusion. A rich body of research in this area is performed under statistical diffusion models with provable submodularity, which essentially simplifies the problem as the optimal result can be approximated by the simple greedy search. When the diffusion models are non-submodular, however, the research community mostly focuses on how to bound/approximate them by tractable submodular functions so as to estimate the optimal result. In other words, there is still a lack of efficient methods that can directly resolve non-submodular influence maximization problems. In this regard, we fill the gap by proposing seed selection strategies using network graphical properties in a generalized threshold-based model, called influence barricade model, which is non-submodular. Specifically, under this model, we first establish theories to reveal graphical conditions that ensure the network generated by node removals has the same optimal seed set as that in the original network. We then exploit these theoretical conditions to develop efficient algorithms by strategically removing less-important nodes and selecting seeds only in the remaining network. To the best of our knowledge, this is the first graph-based approach that directly tackles non-submodular influence maximization. Evaluations on both synthetic and real-world Facebook/Twitter datasets confirm the superior efficiency of the proposed algorithms, which are orders of magnitude faster than benchmarks for large networks.
\end{abstract}
\begin{IEEEkeywords}
Influence Maximization; Viral Marketing; Influence Barricade Model; Theory; Algorithm; Non-submodular
\end{IEEEkeywords}

\section{Introduction}
\label{intro}

The development of high-end portable devices and utmost prevalence of social networks around the globe have drastically increased the speed and frequency of interactions among individuals. Unlike communication networks, social networks provide a unique substrate through which social behaviors, e.g., the adoption of innovations, trends, social awareness, etc., can also propagate. In particular, by leveraging the social impact, individuals with certain social behaviors can affect their friends, and the influenced ones can further affect their friends, etc. The spread of such social influence is referred to as \emph{influence diffusion} in social networks. As one canonical application of influence diffusion, viral marketing takes advantage of the ``word-of-mouth'' effect \cite{Domingos01KDD} to promote the vast spread of product awareness and adoption in a viral replication manner. Since the influence among friends is generally more reliable than third-party media (e.g., TV, radio, and billboard), viral marketing demonstrates substantial advantages than traditional commercial campaigns. Motivated by such influence-diffusion applications, \emph{influence maximization} emerges as a fundamental research issue. Specifically, influence maximization explores a computationally efficient way to select a small subset of the most influential nodes from a social network so that the selected nodes can trigger the maximal influence diffusion. These selected nodes are called \emph{seed nodes}, which act as initial influential sources; moreover, seed  selection is subject to practical constraints, e.g., budget for free samples, and thus only a small subset is allowed to serve as seed nodes.

For influence maximization, it is first investigated as an algorithmic problem by \cite{Domingos01KDD,Richardson02KDD} using statistical models and parameter estimation. Following their work, influence maximization is formulated as a discrete optimization problem in \cite{Kempe03KDD}, where two diffusion models, Independent Cascade model (IC) and Linear Threshold model (LT), are explored. Under these two models, influence maximization is reduced to an easier combinatorial problem where the objective function is submodular.\footnote{Function $f(\cdot)$ mapping a subset of $U$ to a real number is \emph{submodular} if for every $X,Y\subseteq U$ with $X\subseteq Y$ and every $u\in U\setminus Y$, $f(X\cup\{u\})-f(X)\geq f(Y\cup\{u\})-f(Y)$ always holds.}\label{footnote1} For such submodular influence maximization problems, simple hill-climbing greedy algorithm performs well with guaranteed $(1-1/e)$ approximation ratio \cite{Nemhauser78MathProg,Kempe03KDD}. As such, a large amount of research works based on IC/LT models are stimulated thereafter, e.g., algorithms with improved complexity in \cite{Leskovec07KDD,Chen09KDD,Chen10KDD,Jiang11AAAI,Borgs14SODA} and optimization under more constraints in\cite{Ma08CIKM,Zhang13ICDCS,Zhu16INFOCOM}.
Due to the limitation of IC/LT in capturing various social effects in influence propagation, other complicated diffusion models are proposed, e.g., \cite{Lu15VLDB,Lin17ICDE,Anshelevich15AAMAS,Zhang14KDD,Karampourniotis2017,Yang12CIKM,Chen09SDM}, which are, however, no longer submodular. For influence maximization under these non-submodular models, one common approach is to find other submodular models that can approximate and/or bound the objective function in the original problem; the derived results associated with these relaxed models are then used as the estimated solution to the original non-submodular problem. In particular, \cite{Lu15VLDB} establishes a  Sandwich Approximation (SA) strategy that bounds the original problem from both sides (lower/upper bounds) by two submodular functions, and proves the approxiamation ratio of the solution derived from these submodular problems. However, depending on how the bounding submodular functions are chosen, the corresponding result can be arbitrarily worse than the optimal solution. Therefore, there is still a lack of efficient seed selection strategies that can directly tackle non-submodular influence maximization problems. In this regard, we target to investigate efficient algorithms specifically designed to solve non-submodular problems by leveraging network intrinsic attributes, i.e., graphical properties.

To this end, the seed selection approaches need to be developed under diffusion models that are non-submodular and sufficiently generalized to cover more social effects. In the classical IC model, individuals independently influence their neighbors with a predetermined probability, and the correlated influence from neighbors is ignored. Though the LT model considers certain aspects of the correlated neighboring influence, LT cannot capture general hard (non-statistical) influence requirements that must be met to enable the influence diffusion in some real networks. Therefore,\cite{Karampourniotis2017,Yang12CIKM,Chen09SDM} extend LT to consider these hard constraints, which results in the problem to be non-submodular. Broadly speaking, LT and its extensions all fall into the realm of threshold-based models. Specifically, under the threshold-based model, each node associates with a requirement of being influenced; only when the accumulated impact from a node's neighbors exceeds its influence requirement can this node be influenced. However, the parameter settings in these models are highly constrained (i.e., the range of social influence and thresholds). As such, in this paper, we consider one generic threshold-based non-submodular model, called \emph{influence barricade model}, where the influence requirement is a hard constraint (not a random variable) that must be satisfied on all cases for the spread of influence. More importantly, no parameter constraints (e.g., social impacts and influence requirements) in \cite{Kempe03KDD,Karampourniotis2017,Yang12CIKM,Chen09SDM} are imposed to this model.
Such generic model can capture any barricade-like social effect that is abundant \cite{Granovetter78,Aral11Mgmt} in real life. Moreover, this barricade model incorporates the case where there may exist people who are too conservative to be influenced by their friends, which is generally ignored in other threshold-based models.\looseness=-1

To develop efficient solutions under such generic model, we first establish theoretical conditions in terms of network graphical properties to reveal the relationships between the optimal seed sets in two networks that differ by only a few nodes/edges; the established theories indicate under what conditions these two graphs maintain the same (or similar) optimal seed set. Based on such fundamental understanding, we then develop efficient seed selection algorithms for non-submodular influence maximization problems. The basic idea of our proposed algorithms is to iteratively remove nodes, one at a time, with each step trying to guarantee that the remaining network and the original network have (almost) the same optimal seed set using our theoretical conditions. This removal process proceeds until the selection budget is reached. Evaluations in both synthetic and real Facebook/Twitter networks confirm the orders of magnitude improvement of our graph-based algorithms, especially in large networks.

It is worth noting that the main goal of this paper is to develop an efficient solution to non-submodular influence maximization problems. For this goal, we investigate it under a generic diffusion model in the family of threshold-based models. Investigation under other classes of diffusion models is out of the scope of this paper. Nevertheless, our graph-based approaches can shed light on how to use network graphical properties to solve other non-submodular problems.
\subsection{Further Discussions on Related Work}
\label{subsect:relatedWork}

For the basic hill-climbing greedy solution \cite{Kempe03KDD}, the lack of scalability makes it inapplicable to large networks. Therefore, computationally efficient algorithms are developed in \cite{Leskovec07KDD,Chen09KDD,Chen10KDD,Jiang11AAAI,Borgs14SODA}. These algorithms leverage the fact that the influential property associated with most nodes remain unchanged even new nodes are selected as seeds, and thus there is no need to compute these properties again in the following steps.
These efficient algorithms, however, are developed under the IC model, thus not applicable to non-submodular influence maximization problems. Imposing other constraints, \cite{Liu12ICDM,Chen12AAAI,Lu16TMC,Zhu16INFOCOM} investigate how to select the seed set while satisfying the time or cost constraint under the IC model.
Moreover, individuals in social networks may also experience negative effect from neighbors, which are examined in \cite{Ma08CIKM,Zhang13ICDCS}. However, works in \cite{Ma08CIKM,Zhang13ICDCS} are restricted to independent influence cascade or strict parameter values. Another line of research related to our problem is influence maximization in the presence of other competitors \cite{Carnes07ICEC,Lin15Performance} under the assumption of knowing the exact locations or location distributions of competitors' seed nodes. Regarding such competitive influence maximization problems, again they focus on the case where the problem is submodular, thus not widely applicable.

When more social effects are considered, the problems are converted to non-submodular. Under non-submodular models, \cite{Zhang13ICDCS,Zhang14KDD} reuse a pure greedy-search-based algorithm, which is not specific to submodularity. To improve/quantify the algorithm performance, \cite{Lu15VLDB,Anshelevich15AAMAS} approximate the original problem by simpler submodular functions. In addition, Sandwich Approximation is employed in \cite{Lu15VLDB,Lin17ICDE} to bound the original problems by tractable submodular functions. However, such submodularity-approximation-based approaches are sensitive to the submodular functions selected for estimation, and thus there is no concrete performance guarantees. By contrast, we propose state-of-the-art approaches from both theoretical and algorithmic perspectives to directly and specifically address the original non-submodular influence maximization problem.

\subsection{Summary of Contributions}

We study non-submodular influence maximization from the perspective of both theories and algorithm designs. Our contributions are four-fold:

\emph{1)} We prove the influence maximization problem under the influence barricade model is NP-hard and non-submodular.

\emph{2)} We establish fundamental theories that reveal to what extent network graphical changes can affect the optimal seed set, and more importantly, under what conditions minor graphical changes can maintain the optimal seed set.

\emph{3)} We develop superior algorithms based on the theoretical results in \emph{2)} to iteratively reduce the seed selection scope until the optimization goal or the selection budget is reached.

\emph{4)} We evaluate the proposed algorithm in both synthetic networks and real Facebook/Twitter datasets, which confirms the exceptional efficiency of our solutions.

Note that our goal is fundamental understanding and efficient algorithm development that are specifically designed for non-submodular influence maximization problems. To this end, we 
assume that all parameters in the proposed model can be extracted from a network targeted for influence maximization (e.g., using the approaches in \cite{Domingos01KDD,Richardson02KDD}), and are available as input to our algorithms. Refined methods for inferring these parameters are beyond the scope of this paper.\looseness=-1

\vspace{.5em}
The remainder of the paper is organized as follows. Section~\ref{Sect:ProblemFormulation} formulates the problem; corresponding research challenges are discussed in Section~\ref{sect:ResearchChallenges}. Main theoretical results for non-submodular influence maximization are presented in Section~\ref{sect:TheoFoundations}, based on which seed selection algorithms are developed in Section~\ref{sect:AlgSeedSetSelection}. Experiments under both synthetic and real networks are conducted in Section~\ref{sect:evaluations}. Section~\ref{Sect:Conclusion} concludes the paper.

\section{Problem Formulation}
\label{Sect:ProblemFormulation}

In this section, we first introduce the generic threshold-based diffusion model. We then formally formulate our influence maximization problem and state our research objective.

\vspace{-.3em}
\subsection{Influence Diffusion in Social Networks}
\label{subsec:networkModel}
Let directed graph $\mathcal{G}=(V,L)$ represent a social network, 
where $V$ ($L$) denotes the set of nodes (directed edges). $\mathcal{G}$ can be either a connected or disconnected graph; our proposed algorithms are independent of the network connectivity. In $\mathcal{G}$, $\small\overrightarrow{uv}\normalsize\in L$ is a directed edge starting from node $u$ and pointing to node $v$; in addition, edge $\small\overrightarrow{uv}\normalsize$ associates with a positive weight\footnote{For any edge \emph{not} in $L$, the associated weight is defined as $0$.\looseness=-1} $W_{\overrightarrow{uv}}$ (i.e., $W_{\overrightarrow{uv}}>0$, which can be \emph{different} for different edges), which represents the quantified social influence of node $u$ on node $v$. In $\mathcal{G}=(V,L)$, regarding $u\in V$, $v$ is called an \emph{out-neighbor} (or \emph{in-neighbor}) of $u$ if there exists $\small\overrightarrow{uv}\normalsize\in L$ (or there exists $\small\overrightarrow{vu}\normalsize\in L$). For each node in $\mathcal{G}$, its status is either \emph{active} or \emph{inactive}. A node $u\in V$ is active if $u$ adopts the innovations, trends, awareness, etc., that are being propagated in the network; otherwise, $u$ is inactive. In this paper, we consider the progressive diffusion process, where node status can switch from inactive to active, but status cannot change in the opposite direction. In social network $\mathcal{G}=(V,L)$, let $S$ ($S\subseteq V$) be the set of initial active nodes, referred to as \emph{seed nodes}, which act as the original influence sources (e.g., nodes selected by a company for a product campaign). Based on such seed set, inactive nodes may become active in subsequent time steps via influence diffusion.


\subsection{Influence Barricade Model}
\label{subsec:InfluenceBarricadeModel}

In this paper, we consider the following generic threshold-based diffusion model, called \emph{influence barricade model}. Suppose $\forall u\in V$, $u$ associates with a fixed non-negative influence threshold $b_u$ (i.e., $b_u\geq 0$, which can be \emph{different} for different nodes), called the \emph{barricade factor}, which determines the difficulty level of $u$ to be influenced by its neighbors. In particular, given the network state at time $t$, for any inactive node $u$, $u$ becomes active at time $t+1$ if and only if the sum weight from all its active in-neighbors is at least $b_u$ (i.e., $\sum_{v\text{ is an active in-neighbor of }u}W_{\overrightarrow{vu}}$ $\geq b_u$); such diffusion cascade proceeds as a \emph{discrete}-time process. In this model, no constraints are imposed on the values of barricade factors or the edge weights in the social network. Here, the barricade factor with respect to a node represents the hard and uncompromisable requirement that must be satisfied by its active neighbors so as to influence this node. Such barricade-like diffusion effect has been shown in \cite{Granovetter78,Aral11Mgmt,Bakshy12WWW} to be abundant in real life (e.g., diffusion in knowledge networks). Specifically, a barricade factor quantifies a node's unwillingness in adopting innovations or other social phenomena; moreover, such social unwillingness is unlikely to change within a short period of time, and thus assumed to be a fixed parameter.\looseness=-1

\vspace{.5em}
\emph{Discussions:} At first sight, the influence barricade model is similar to the models\footnote{In \cite{Yang12CIKM}, when the per-node preference score is set to $0$, the corresponding diffusion model is reduced to a simple threshold-based model.} in \cite{Karampourniotis2017,Yang12CIKM,Chen09SDM} in the way of how influence cascades in weighted directed graphs with nodes exhibiting non-negative influence thresholds. Nevertheless, in the influence barricade model, we do not impose any constraints on the maximum value of edge weights or barricade factors. By contrast, the threshold parameter is limited to $[0,1]$ (i.e., the same as that in the LT model) in \cite{Karampourniotis2017,Yang12CIKM}, and $[0,d(v)]$ for node $v$ ($d(v)$: the number of neighbors of node $v$) in \cite{Chen09SDM}. Therefore, each node under these parameter constraints can be influenced by their neighbors with a positive probability.
By contrast, under our barricade model, there may exist a non-negligible number of nodes who cannot be influenced by their neighbors (i.e., corresponding barricade factors are large), and thus these nodes have to be selected as seed nodes if they are targeted to be active. Therefore, the influence barricade model is more general in the realm of threshold-based diffusion models.\looseness=-1

\subsection{Influence Maximization Problem}
With the influence barricade model, we now formally define the influence maximization problem. Let $A(S)$ denote the set of all active nodes ($S\subseteq A(S)$) that are influenced by seed set $S$ at the moment when no other nodes can become active even if more time is given; this particular moment is called \emph{the end of the influence process} in the sequel. Let $k$ be the maximum number of seed nodes that are allowed to be selected from a network, i.e.,
$|S|\leq k$, and $\sigma(S):=|A(S)|$ the total number of active nodes at the end of the influence process. Our objective is to maximize the influence by selecting a constrained number of seed nodes, as stated below.

\vspace{.5em}
\textbf{Objective:} Given graph $\mathcal{G}=(V,L)$, select a set of seed nodes $S$ from $V$ with $|S|\leq k$ such that $\sigma(S)$ is maximized under the influence barricade model.
\vspace{.5em}

Let seed set $S^*_k$ denote the optimal solution to the above problem when up to $k$ seed nodes are allowed to be selected from $V$. We aim to determine $S^*_k$ under any $k$. Regarding the global network status, we call the network under one seed set achieves \emph{full influenceability} if every node is active at the end of the influence process; otherwise, we call it \emph{partial influenceability}. With this concept, let $S^*_{\mathcal{G}}$ be a minimum seed set required for full influenceability in $\mathcal{G}=(V,L)$, i.e., $\sigma(S^*_{\mathcal{G}})=|V|$ and $\forall S\in\{Q\subseteq V: |Q|<|S^*_{\mathcal{G}}|\}$ $\sigma(S)<|V|$. Then we can further define $S^*_k$ as $S^*_k=S^*_{\mathcal{G}}$ if $k\geq |S^*_{\mathcal{G}}|$ since $\sigma(\cdot)$ already achieves the maximum when $k= |S^*_{\mathcal{G}}|$. Moreover, $S^*_k$ generally is not unique; see Section~\ref{subsec:Example} for examples.
%

\subsection{Illustrative Example}
\label{subsec:Example}

\begin{figure}[tb]
\vspace{-.3em}
\centering
\includegraphics[width=3.2in]{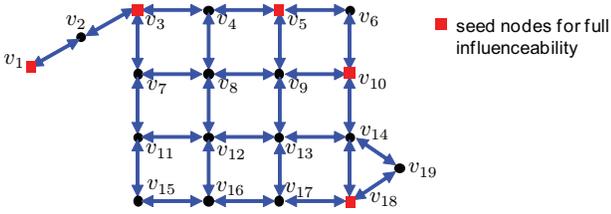}
\vspace{-1em}
\caption{Seed selection in a sample network (weight is $1$ for all edges; barricade factor is $2$ for all nodes; $v_1$ must be selected as a seed node for full influenceability).} \label{fig:Illustrative_Example}
\vspace{-.5em}
\end{figure}

Fig.~\ref{fig:Illustrative_Example} shows a sample bidirectional network with $19$ nodes. Suppose the weight is $1$ for each edge and the barricade factor is $2$ for each node. Then the optimal seed set for full influenceability is $S^*_{\mathcal{G}}=\{v_1,v_3,v_5,v_{10},v_{18}\}$. Let $A_{t}(S)$ be the set of active nodes at time step $t$ under seed set $S$, where $A_0(S)=S$. Then $A_{0}(S^*_{\mathcal{G}})=S^*_{\mathcal{G}}$,
$A_{1}(S^*_{\mathcal{G}})=A_{0}(S^*_{\mathcal{G}})\cup\{v_2,v_4,v_6,v_{14}\}$,
$A_{2}(S^*_{\mathcal{G}})=A_{1}(S^*_{\mathcal{G}})\cup\{v_9,v_{19}\}$,
$A_{3}(S^*_{\mathcal{G}})=A_{2}(S^*_{\mathcal{G}})\cup\{v_8,v_{13}\}$,
$A_{4}(S^*_{\mathcal{G}})=A_{3}(S^*_{\mathcal{G}})\cup\{v_7,v_{12},v_{17}\}$,
$A_{5}(S^*_{\mathcal{G}})=A_{4}(S^*_{\mathcal{G}})\cup\{v_{11},v_{16}\}$,
$A_{6}(S^*_{\mathcal{G}})=A_{5}(S^*_{\mathcal{G}})\cup\{v_{15}\}$; therefore, $\sigma(S^*_{\mathcal{G}})=19$. In $S^*_{\mathcal{G}}$, node $v_1$ must be selected as a seed node for full influenceability, as $v_1$ cannot be influenced even if all other nodes in the network are active. Given $S^*_{\mathcal{G}}$,
we know that $S^*_k=\{v_1,v_3,v_5,v_{10},v_{18}\}$ for $k\geq 5$. For other values of $k$, we have $S^*_1=\{v_{18}\}$ with $A(S^*_1)=\{v_{18}\}$, $S^*_2=\{v_{13},v_{18}\}$ with $A(S^*_2)=\{v_{13},v_{14},v_{17},v_{18},v_{19}\}$, $S^*_3=\{v_8,v_{13},v_{18}\}$ with $A(S^*_3)=\{v_8,v_9,v_{10},v_{12},v_{13},v_{14},v_{16},v_{17},v_{18},v_{19}\}$, and $S^*_4=\{v_3,v_8,v_{13},v_{18}\}$ with $A(S^*_4)=\{v_i:i=3,4,\ldots,19\}$. Note that the optimal solution is not unique, e.g., seed set $\{v_4,v_9,v_{14},v_{15}\}$ is also optimal when $k=4$, and seed set $\{v_1,v_3,v_8,v_{13},v_{18}\}$ also achieves full influenceability using $5$ seed nodes.\looseness=-1

\section{Research Hardness}
\label{sect:ResearchChallenges}

Under the influence barricade model, we prove that the influence maximization problem is NP-hard, detailed below.

\begin{theorem}\label{thm:NPhard}
The influence maximization problem, i.e., maximizing $\sigma(S)$ subject to $|S|\leq k$, is NP-hard under the influence barricade model.
\end{theorem}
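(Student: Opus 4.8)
The plan is to establish NP-hardness by a polynomial-time reduction from the decision version of \textsc{Set Cover} (a reduction from \textsc{Vertex Cover} works equally well and in essentially the same way). Recall a \textsc{Set Cover} instance: a universe $U=\{e_1,\dots,e_n\}$, subsets $T_1,\dots,T_m\subseteq U$, and an integer $k$; the question is whether some $k$ of the $T_i$'s cover $U$. From such an instance I would construct a directed graph $\mathcal{G}=(V,L)$ with one \emph{set node} $s_i$ for each $T_i$ and one \emph{element node} $u_j$ for each $e_j$, placing a directed edge $\overrightarrow{s_i u_j}$ exactly when $e_j\in T_i$. All edge weights are set to $1$ and all barricade factors to $1$. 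The key structural point is the orientation: element nodes have no out-edges and set nodes have no in-edges. Hence seeding a set node $s_i$ activates precisely $\{s_i\}$ together with $\{u_j : e_j\in T_i\}$ and nothing further (the cascade halts after one step), while seeding an element node activates only that node. The influence-maximization instance uses the same budget $k$ and asks whether $\sigma(S)\ge n+k$ is attainable with $|S|\le k$.

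The correctness argument then splits into the two standard directions. If $k$ of the sets cover $U$, seeding the corresponding $k$ set nodes makes every element node active plus the $k$ seeds themselves, so $\sigma(S)=n+k$. Conversely, suppose $|S|\le k$ with $\sigma(S)\ge n+k$; let $a$ be the number of set nodes in $S$. The active set at the end of the process consists of those $a$ set nodes together with a collection of element nodes, of which there are at most $n$ in total, so $\sigma(S)\le a+n\le k+n$. Equality forces $a=k$ (so no element node is used as a seed) and all $n$ element nodes active, i.e.\ $\bigcup_{s_i\in S}T_i=U$; thus the chosen $k$ sets cover $U$. Therefore the \textsc{Set Cover} instance is a yes-instance if and only if the constructed instance admits a seed set of size at most $k$ with $\sigma\ge n+k$. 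Since the construction is plainly computable in polynomial time, NP-hardness of the influence maximization problem follows. I would also note that this reduction uses only uniform unit weights and unit barricade factors, so the hardness already holds for the most restricted special case of the model, a fortiori for the general parameter setting of the influence barricade model.

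The only place a careless argument could slip is in pinning down $A(S)$ exactly for \emph{mixed} seed sets (those containing both set nodes and element nodes) and verifying that the cascade really terminates after a single round where claimed; this is entirely a consequence of the edge orientation (set$\,\to\,$element only), and once that is stated cleanly the counting bound $\sigma(S)\le a+n$ and its equality condition are routine bookkeeping. An alternative is to phrase the reduction from \textsc{Maximum Coverage} so that it mirrors the optimization form of the objective directly, but the \textsc{Set Cover} decision reduction above is the shortest path to the stated claim.
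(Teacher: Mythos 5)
Your proof is correct, but it takes a different route from the paper's. The paper reduces from \textsc{Vertex Cover}: it keeps the vertex set of the given undirected graph, makes every edge bidirectional with weight $1$, and sets each barricade factor $b_v$ equal to the in-degree of $v$, so that a non-seed node activates only when \emph{all} of its neighbors are active; the target is full influenceability, $\sigma(S)=|V|$. You instead reduce from \textsc{Set Cover} via a bipartite set-node/element-node gadget with all weights and all barricade factors equal to $1$, so that a node activates as soon as \emph{any} in-neighbor is active, and you use the partial target $\sigma(S)\ge n+k$. Your counting argument ($\sigma(S)\le a+n$ with equality forcing $a=k$ and full coverage) correctly handles mixed seed sets, and the one-step termination follows from the edge orientation as you say, so the equivalence holds. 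What each approach buys: your reduction shows hardness already in the most degenerate parameter regime (uniform unit weights and thresholds), though note that in that regime the activation rule degenerates to reachability/coverage, so the instance you build actually lies in the submodular corner of the model --- the hardness you exhibit is the combinatorial hardness of coverage rather than anything specific to the threshold structure. The paper's reduction, by setting $b_v$ equal to the in-degree, exercises the AND-type threshold behavior that makes the barricade model non-submodular, which fits more naturally with the paper's narrative; on the other hand, your construction is arguably cleaner to verify since the cascade is only one round deep. Both are valid proofs of the stated NP-hardness.
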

\begin{proof}
This can be proved by the reduction from the \emph{Vertex Cover} problem, which is known to be NP-hard. Consider one instance of the Vertex Cover problem defined on undirected graph $\mathcal{G}=(V,L)$ and $k$, where the goal is to know whether there exists a size-$k$ set $S\in V$ such that each edge in $L$ has at least one endpoint in $S$. Given this Vertex Cover problem, an influence maximization problem is defined on a weighted directed graph $\mathcal{G}'=(V,L')$, where (i) $\mathcal{G}'$ and $\mathcal{G}$ have the same set of nodes, (ii) each edge in $L$ is changed to bidirectional edges with weights being $1$ in both directions and included in $L'$, and (iii) the barricade factor $b_v$ is the number of in-neighbors of $v$ in $\mathcal{G}'$. We prove that $\exists$size-$k$ set $S$ in $\mathcal{G}'$ with $\sigma(S)=|V|$ if and only if $S$ is a vertex cover in $\mathcal{G}$.

Suppose $S$ is a vertex cover in $\mathcal{G}$. Then for each node $v\in V\setminus S$, the sum weights of $v$'s in-neighbors is the same as the barricade factor $b_v$ due to the way that $\mathcal{G}'$ is constructed; therefore, $\sigma(S)=|V|$. Next, suppose $\exists$ a seed set $S$ in $\mathcal{G}'$ with $\sigma(S)=|V|$ but $S$ is not a set cover in $\mathcal{G}$. Then $\exists$ an edge $uv\in L$ that is not incident to any nodes in $S$. Since the barricade factor for both $u$ and $v$ is the corresponding number of in-neighbors in $\mathcal{G}'$, we have that $u$ becoming active only if $v$ is already active, and vice versa. Therefore, $u$ and $v$ cannot be both active even if all nodes in $S$ are active. This contradicts the assumption that $\sigma(S)=|V|$.

Hence, the above two problems are equivalent, and thus our influence maximization problem is NP-hard.
\end{proof}

Therefore, we need to develop heuristic solutions. In addition, the influence barricade model causes the influence maximization problem to be non-submodular, as stated below.

\begin{claim}
\label{clm:non-submodular}
The influence maximization problem under the influence barricade model is \emph{not} submodular.
\end{claim}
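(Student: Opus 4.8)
The plan is to disprove submodularity of $\sigma(\cdot)$ directly, by exhibiting a tiny instance of the influence barricade model together with sets $X\subseteq Y$ and an element $u\notin Y$ for which the diminishing-returns inequality in the definition of submodularity fails, i.e. $\sigma(X\cup\{u\})-\sigma(X) < \sigma(Y\cup\{u\})-\sigma(Y)$. The threshold nature of the model makes such a configuration easy to engineer: a node that fires only under the \emph{combined} push of two active in-neighbors creates synergy between those two neighbors, which is a supermodular (hence non-submodular) effect.

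Concretely, I would use the gadget $\mathcal{G}=(V,L)$ with $V=\{u,v,w\}$, directed edges $\overrightarrow{uw}$ and $\overrightarrow{vw}$ of weight $1$, and barricade factors $b_w=2$, $b_u=b_v=1$. Take $X=\emptyset$, $Y=\{v\}$, and add the node $u$. The four quantities are immediate from the one-round dynamics: $\sigma(\emptyset)=0$; $\sigma(\{u\})=1$, because $w$ receives only weight $1<2=b_w$ and cannot activate; $\sigma(\{v\})=1$ by symmetry; and $\sigma(\{u,v\})=3$, because $w$ now receives weight $1+1=2\ge b_w$ and becomes active. Hence the marginal gain of $u$ is $1$ over the subset $X$ but $2$ over the superset $Y$, so $\sigma(X\cup\{u\})-\sigma(X)=1<2=\sigma(Y\cup\{u\})-\sigma(Y)$, contradicting submodularity.

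To phrase this as a statement about ``the influence maximization problem'' rather than an isolated gadget, I would observe that the construction above is already a legal instance of the model (no parameter constraints are violated), and that the same pattern arises as an induced piece of larger networks; for example, in Fig.~\ref{fig:Illustrative_Example}, a node such as $v_{15}$, which activates only after two of its neighbors are active, plays exactly the role of $w$, so the identical non-submodular behavior appears inside a connected network with uniform weights and barricade factors.

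There is no genuinely hard step here. The only care needed is to check that the chosen triple satisfies $X\subseteq Y$ and $u\notin Y$ and that the end-of-process active sets are computed correctly (in the gadget the diffusion halts after a single step). It is also worth remarking, for context, why this does not clash with the classical submodularity of the Linear Threshold model: there the thresholds are independent uniform random variables, and taking expectation over them smooths out precisely this ``all-or-nothing'' synergy; under the barricade model the thresholds are fixed hard constraints, so the synergy is not averaged away and submodularity genuinely fails.
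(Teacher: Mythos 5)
Your proof is correct and takes essentially the same approach as the paper: both disprove submodularity by exhibiting sets $X\subseteq Y$ and a node $u\notin Y$ whose marginal gain is strictly larger over the superset. The only difference is that you construct a minimal three-node gadget (with the correct verification that $w$ fires only under the combined weight $1+1\geq b_w$), whereas the paper reuses its $19$-node running example with $S_1=\{v_8,v_{13}\}$, $S_2=\{v_3,v_8,v_{13}\}$, and added node $v_{18}$.
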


\begin{proof}
We prove by a counter example. Under the network settings in Fig.~\ref{fig:Illustrative_Example}, let $S_1=\{v_8,v_{13}\}$ and $S_2=\{v_3,v_8,v_{13}\}$. Then $\sigma(S_1)=4$ and $\sigma(S_2)=9$; moreover, $\sigma(S_1\cup\{v_{18}\})=10$ and $\sigma(S_2\cup\{v_{18}\})=17$. Obviously, $S_1\subset S_2$, $v_{18}\notin S_2$, and $\sigma(S_1\cup\{v_{18}\})-\sigma(S_1)<\sigma(S_2\cup\{v_{18}\})-\sigma(S_2)$, which violates the definition of submodularity in footnote~1.\looseness=-1
\end{proof}

Regarding such non-submodular problem, we propose state-of-the-art algorithms by leveraging network graphic properties to identify structurally critical nodes.

\section{Theoretical Foundations for \\Seed Set Selection}
\label{sect:TheoFoundations}

The algorithm development for efficient seed selection requires fundamental understanding of the linkage between network properties and node importance levels of being seeds.
In this regard, we establish design principles from the theoretical perspective, focusing on identifying crucial network properties that may affect the seed selection for full influenceability. The significance of these properties is that they provide a concrete support for the design goal of both full and partial influenceability (see Section~\ref{sect:AlgSeedSetSelection} for details). In particular, we investigate how minor changes to the network (e.g., add/remove a(n) node/edge) may affect set $S^*_{\mathcal{G}}$ (recall that $S^*_{\mathcal{G}}$ denotes the minimum seed set for full influenceability in $\mathcal{G}$). The idea for our following-up algorithm design is that if a certain sequence of minor changes is proved to have little or no impact on $S^*_{\mathcal{G}}$, then we are able to narrow $S^*_{\mathcal{G}}$ down to the nodes in the remaining network. Since node changes generally also incur edge changes, we therefore first study the impact of minor edge changes on the optimal seed set selection.

\begin{theorem}
\label{thm:addOneEdge}
Let $\mathcal{G}=(V,L)$ and $\mathcal{G}'=(V,L\cup\{\overrightarrow{v_1v_2},\overrightarrow{v_2v_1}\})$ with the corresponding minimum seed set for full influenceability being $S^*_{\mathcal{G}}$ and $S^*_{\mathcal{G}'}$ respectively, where $v_1,v_2\in V$ and $\overrightarrow{v_1v_2},\overrightarrow{v_2v_1}\notin L$. Then $0\leq |S^*_{\mathcal{G}}|-|S^*_{\mathcal{G}'}|\leq 1$.
\end{theorem}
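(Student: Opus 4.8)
The plan is to establish the two inequalities separately. For $|S^*_{\mathcal{G}}|\geq|S^*_{\mathcal{G}'}|$ I would argue monotonicity: adding edges can only speed up the cascade. Concretely, writing $A^{\mathcal{G}}_t(S)$ and $A^{\mathcal{G}'}_t(S)$ for the active sets at step $t$ under a common seed set $S$ (both graphs share $V$, the weights on the edges of $L$, and all barricade factors), I would prove $A^{\mathcal{G}}_t(S)\subseteq A^{\mathcal{G}'}_t(S)$ for every $t$ by induction: equality holds at $t=0$, and in the inductive step any node whose barricade threshold is met in $\mathcal{G}$ by its active in-neighbors has those same in-neighbors available and (by the inductive hypothesis) active in $\mathcal{G}'$, since every node's $\mathcal{G}'$-in-neighbor set contains its $\mathcal{G}$-in-neighbor set with identical positive weights; hence the node is activated in $\mathcal{G}'$ as well. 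Passing to the fixed point gives $A^{\mathcal{G}}(S)\subseteq A^{\mathcal{G}'}(S)$, so any full-influence seed set of $\mathcal{G}$ — in particular $S^*_{\mathcal{G}}$ — is a full-influence seed set of $\mathcal{G}'$, which yields $|S^*_{\mathcal{G}'}|\leq|S^*_{\mathcal{G}}|$.

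For the reverse bound $|S^*_{\mathcal{G}}|\leq|S^*_{\mathcal{G}'}|+1$, I would start from an optimal set $T:=S^*_{\mathcal{G}'}$, run the diffusion in $\mathcal{G}'$ from $T$ (which activates all of $V$), and let $t_1,t_2$ be the steps at which $v_1,v_2$ become active; relabel so that $t_1\leq t_2$, which is harmless since the statement is symmetric in $v_1$ and $v_2$. The key observation is that $v_1$ is activated in the $\mathcal{G}'$-cascade \emph{without} using the new edge $\overrightarrow{v_2v_1}$: since $v_2$ is not active before step $t_2\geq t_1$, it is in particular inactive at step $t_1-1$, so the threshold of $v_1$ is already met by in-neighbors lying in $L$ (the degenerate case $t_1=0$, i.e., $v_1\in T$, is immediate). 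I would then claim that $S:=T\cup\{v_2\}$ achieves full influenceability in $\mathcal{G}$, proving $A^{\mathcal{G}'}_t(T)\subseteq A^{\mathcal{G}}_t(S)$ for all $t$ by induction. In the inductive step a node $u$ newly activated by the $\mathcal{G}'$-cascade at step $t+1$ is handled by cases: if $u=v_2$ it is a seed of $S$; if $u=v_1$ its threshold was met, by the key observation, through in-neighbors in $L$ that are active in $\mathcal{G}$ at step $t$ by the inductive hypothesis; and if $u\notin\{v_1,v_2\}$ its in-neighbor set is identical in $\mathcal{G}$ and $\mathcal{G}'$, so the inductive hypothesis transfers the threshold condition directly. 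Thus $A^{\mathcal{G}}(S)=V$ with $|S|\leq|T|+1$, giving $|S^*_{\mathcal{G}}|\leq|S^*_{\mathcal{G}'}|+1$, and combining the two bounds proves the theorem.

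I expect the reverse bound to be the crux. The subtle point is that $\mathcal{G}$ is missing \emph{both} of the new edges, so a priori the $\mathcal{G}$-cascade from $T$ augmented by one seed might fail to reproduce the activation of either $v_1$ or $v_2$; the argument goes through only because at most one of the two endpoints — the one activated later — can possibly have relied on a new edge, and that is precisely the endpoint we choose to seed, while the earlier endpoint is activated purely through edges of $L$. Getting the timing bookkeeping exactly right (that $v_2$ is inactive at step $t_1-1$, and the degenerate cases where $v_1$ or $v_2$ already lies in $T$) is the part that needs care; the monotonicity direction is routine.
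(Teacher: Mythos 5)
Your proof is correct and follows essentially the same route as the paper's: monotonicity of the cascade under edge addition for the first inequality, and for the second the observation that at most one of the two antiparallel new edges (the one into the later-activated endpoint) can actually be used in the $\mathcal{G}'$-cascade, so seeding that endpoint restores full influenceability in $\mathcal{G}$. Your version merely makes the paper's informal ``at most one edge is useful / worst case'' argument rigorous via the activation times $t_1\leq t_2$ and the step-by-step induction.
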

\begin{proof}
It is obvious that $|S^*_{\mathcal{G}}|\geq |S^*_{\mathcal{G}'}|$ as $\mathcal{G}$ is a subgraph of $\mathcal{G}'$.
Moreover, at most one of $\small\overrightarrow{v_1v_2}\normalsize$ and $\small\overrightarrow{v_2v_1}\normalsize$ is useful in achieving full influenceability in $\mathcal{G}'$ as $v_1$ and $v_2$ cannot mutually influence each other. Without loss of generality, suppose $\small\overrightarrow{v_1v_2}\normalsize$ is used to influence $v_2$ for achieving full influenceability. Then when $\small\overrightarrow{v_1v_2}\normalsize$ and $\small\overrightarrow{v_2v_1}\normalsize$ are removed from $\mathcal{G}'$ (i.e., we get $\mathcal{G}$), to maintain the full influenceability in $\mathcal{G}$, the worst case is that $v_2$ is selected as a seed node while all other seed nodes remain unchanged. Thus, $|S^*_{\mathcal{G}}|-|S^*_{\mathcal{G}'}|\leq1$.
\end{proof}

Following similar arguments in the proof of Theorem~\ref{thm:addOneEdge}, the following corollary also holds.

\begin{corollary}
\label{cor:addOneEdge}
Let $\mathcal{G}=(V,L)$ and $\mathcal{G}'=(V,L\cup\{\overrightarrow{v_1v_2}\})$ with the corresponding minimum seed set for full influenceability being $S^*_{\mathcal{G}}$ and $S^*_{\mathcal{G}'}$ respectively, where $v_1,v_2\in V$ and $\overrightarrow{v_1v_2}\notin L$. Then $0\leq |S^*_{\mathcal{G}}|-|S^*_{\mathcal{G}'}|\leq 1$.
\end{corollary}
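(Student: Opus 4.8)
The plan is to mirror the proof of Theorem~\ref{thm:addOneEdge} almost verbatim, since adding a single directed edge $\overrightarrow{v_1v_2}$ is a special case of the bidirectional situation in which one of the two arcs happens to be useless anyway. The lower bound $|S^*_{\mathcal{G}}| - |S^*_{\mathcal{G}'}| \geq 0$ is immediate because $\mathcal{G}$ is a subgraph of $\mathcal{G}'$ (same node set, fewer edges): any seed set achieving full influenceability in $\mathcal{G}$ still achieves it in $\mathcal{G}'$, since active in-neighbors of any node $u$ in $\mathcal{G}$ are still active in-neighbors of $u$ in $\mathcal{G}'$ and the barricade condition $\sum W_{\overrightarrow{vu}} \geq b_u$ is only made easier by the extra edge. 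Hence $|S^*_{\mathcal{G}'}| \leq |S^*_{\mathcal{G}}|$.

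For the upper bound I would take an optimal seed set $S^*_{\mathcal{G}'}$ for $\mathcal{G}'$ and run the diffusion process in $\mathcal{G}'$, which reaches all of $V$. Now remove the arc $\overrightarrow{v_1v_2}$ to recover $\mathcal{G}$ and consider the candidate seed set $S^*_{\mathcal{G}'} \cup \{v_2\}$. I claim this set achieves full influenceability in $\mathcal{G}$: replaying the same activation order as in $\mathcal{G}'$, the only node whose barricade condition could possibly fail when the arc $\overrightarrow{v_1v_2}$ is absent is $v_2$ itself (every other node $u \neq v_2$ has exactly the same set of in-edges in $\mathcal{G}$ as in $\mathcal{G}'$, so its activation proceeds identically); but $v_2$ is now a seed, so it is active from time $0$, and all subsequent activations that depended on $v_2$ being active still go through. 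Therefore $|S^*_{\mathcal{G}}| \leq |S^*_{\mathcal{G}'}| + 1$, giving $|S^*_{\mathcal{G}}| - |S^*_{\mathcal{G}'}| \leq 1$.

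The one subtlety worth being careful about — and the only place the argument differs from a completely trivial restatement — is the claim that making $v_2$ a seed does not "waste" a slot relative to $S^*_{\mathcal{G}'}$ in a way that breaks the inequality: the bound we want is on cardinalities, and $|S^*_{\mathcal{G}'} \cup \{v_2\}| \leq |S^*_{\mathcal{G}'}| + 1$ regardless of whether $v_2 \in S^*_{\mathcal{G}'}$, so the argument holds with equality in the worst case and strict inequality if $v_2$ was already seeded. I also need to note that $S^*_{\mathcal{G}}$ is by definition a \emph{minimum} full-influenceability seed set, so $|S^*_{\mathcal{G}}|$ is at most the size of \emph{any} full-influenceability seed set in $\mathcal{G}$, in particular at most $|S^*_{\mathcal{G}'} \cup \{v_2\}|$. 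There is no real obstacle here; the "hard part," such as it is, is simply the bookkeeping observation that removing a single in-arc of $v_2$ affects the activation timeline of no node other than $v_2$, which is what lets us patch the seed set by adding just $v_2$.
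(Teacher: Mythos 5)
Your proof is correct and follows essentially the same route as the paper, which proves the corollary by the same argument as Theorem~\ref{thm:addOneEdge}: the lower bound from $\mathcal{G}$ being a subgraph of $\mathcal{G}'$, and the upper bound by patching the optimal seed set of $\mathcal{G}'$ with $v_2$ as an extra seed once the arc $\overrightarrow{v_1v_2}$ is removed. Your version is in fact somewhat more careful than the paper's, since you explicitly justify why only $v_2$'s activation can be affected by deleting that single in-arc.
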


Theorem~\ref{thm:addOneEdge} and Corollary~\ref{cor:addOneEdge} consider a simple case where $\mathcal{G}$ and $\mathcal{G}'$ differ by only one or two edges between a pair of nodes.
Since $\mathcal{G}$ is a subgraph of $\mathcal{G}'$, the additional edges in $\mathcal{G}'$ do not have negative impact on the spread of influence, and thus it is obvious that $S^*_{\mathcal{G}}$ also achieves full influenceability in $\mathcal{G}'$. Nevertheless, with additional edges in $\mathcal{G}'$, it is possible that the size of the minimum seed set for full influenceability in $\mathcal{G}'$ can be reduced, i.e., $|S^*_{\mathcal{G}'}|< |S^*_{\mathcal{G}}|$. However, Theorem~\ref{thm:addOneEdge} and Corollary~\ref{cor:addOneEdge} show that such reduction is upper bounded by $1$, i.e., additional edges between a pair of nodes can reduce the size of the minimum seed set for full influenceability by at most $1$. Therefore, the significance of Theorem~\ref{thm:addOneEdge} and Corollary~\ref{cor:addOneEdge} is that they quantify the impact of edge changes by providing lower and upper bound with the gap being only $1$. More importantly, these results are independent of the values of $\{b_u\}_{u\in V}$ and $\{W_{\overrightarrow{uv}}\}_{\overrightarrow{uv}\in L}$, thus applicable to networks under any parameter settings. Next, based on these results, we consider the case of node changes, which is a more complicated case as it is also accompanied by the changes of neighboring edges. For ease of presentation, we first give two definitions.\looseness=-1

\begin{definition}
\label{def:deficient}
\emph{\\ \indent 1)}
Node $u$ with $b_u>\sum_{w\in N_{\text{in}}}W_{\overrightarrow{wu}}$, where $N_\text{in}$ is the set of all in-neighbors of $u$ in network $\mathcal{G}$, is an \emph{influence deficient node}, for which $u\in S^*_{\mathcal{G}}$ always holds.

2) An edge is \emph{redundant} if the removal of which does not affect the size of the minimum seed set for full influenceability, i.e., edges $L'$ in $\mathcal{G}=(V,L)$ are redundant if and only if $|S^*_{\mathcal{G}}|=|S^*_{\mathcal{G}'}|$, where $\mathcal{G}'=\{V,L\setminus L'\}$.
\end{definition}

With these definitions, we derive lower/upper bounds to quantify the maximum impact of single node changes on the optimal seed set for full influenceability as follows, where $\mathcal{V}(\mathcal{G}')$ denotes the set of nodes in graph $\mathcal{G}'$.

\begin{corollary}
\label{cor:addOneNode}
Given graph $\mathcal{G}=(V,L)$, suppose an extra node $v$ ($v\notin V$) is added and connected to a set of nodes $N$ in $\mathcal{G}$, thus forming a new graph $\mathcal{G}'$. Let $S^*_{\mathcal{G}}$ and $S^*_{\mathcal{G}'}$ denote the minimum seed set for full influenceability in $\mathcal{G}$ and $\mathcal{G}'$ respectively. Then $\max(|M|,|S^*_{\mathcal{G}}|+1-|N|)\leq |S^*_{\mathcal{G}'}|\leq |S^*_{\mathcal{G}}|+1$, where $M=\{w\in \mathcal{V}(\mathcal{G}'): b_w>\sum_{z\in \mathcal{V}(\mathcal{G}')} W_{\overrightarrow{zw}}\}$ (i.e., influence deficient nodes in $\mathcal{G}'$).
\end{corollary}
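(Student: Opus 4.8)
The plan is to establish the three bounds separately, reusing the edge-based results wherever possible. For the upper bound $|S^*_{\mathcal{G}'}|\leq |S^*_{\mathcal{G}}|+1$, I would argue directly: given any minimum seed set $S^*_{\mathcal{G}}$ for full influenceability in $\mathcal{G}$, the set $S^*_{\mathcal{G}}\cup\{v\}$ achieves full influenceability in $\mathcal{G}'$, since in $\mathcal{G}'$ the node $v$ is active from the start and the remaining nodes are exactly those of $\mathcal{G}$ with (at least) their original in-neighbor structure intact, so the diffusion in $\mathcal{G}$ proceeds unchanged and activates all of $V$. Hence $|S^*_{\mathcal{G}'}|\leq|S^*_{\mathcal{G}}|+1$.

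For the lower bound $|M|\leq|S^*_{\mathcal{G}'}|$, I would invoke part~1 of Definition~\ref{def:deficient}: every influence deficient node $w$ of $\mathcal{G}'$ satisfies $w\in S^*_{\mathcal{G}'}$, because even with all of $\mathcal{G}'$ active the accumulated in-weight at $w$ falls short of $b_w$; therefore $M\subseteq S^*_{\mathcal{G}'}$ and $|M|\leq|S^*_{\mathcal{G}'}|$. For the other lower bound $|S^*_{\mathcal{G}}|+1-|N|\leq|S^*_{\mathcal{G}'}|$, the idea is to run the construction in reverse. Starting from $\mathcal{G}'$ and its minimum full-influenceability seed set $S^*_{\mathcal{G}'}$, deleting the node $v$ removes $v$ and its incident edges (to the set $N$), yielding $\mathcal{G}$. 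I would show that $(S^*_{\mathcal{G}'}\setminus\{v\})\cup N$ achieves full influenceability in $\mathcal{G}$: the only nodes whose activation could have relied on $v$ are the out-neighbors of $v$, all contained in $N$, so forcing all of $N$ into the seed set compensates for the loss of $v$, while every other node activates exactly as it did in $\mathcal{G}'$ (its in-neighbor set in $\mathcal{G}$ is its in-neighbor set in $\mathcal{G}'$). This seed set has size at most $|S^*_{\mathcal{G}'}|-[v\in S^*_{\mathcal{G}'}]+|N|\leq|S^*_{\mathcal{G}'}|+|N|$, but a cleaner accounting, peeling $v$ off one edge at a time and applying Theorem~\ref{thm:addOneEdge}/Corollary~\ref{cor:addOneEdge} across the $|N|$ edge deletions plus the final isolated-node deletion, gives $|S^*_{\mathcal{G}}|\leq|S^*_{\mathcal{G}'}|+|N|-1$, i.e. $|S^*_{\mathcal{G}}|+1-|N|\leq|S^*_{\mathcal{G}'}|$; the $-1$ arises because deleting an isolated node decreases the minimum seed set by exactly $1$, offsetting one of the $|N|$ possible increases from the edge deletions.

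The main obstacle is the incremental edge-deletion argument for the last bound: Corollary~\ref{cor:addOneEdge} bounds the change in $|S^*|$ when a single edge is added or removed, but chaining it over the $|N|$ edges incident to $v$ plus the removal of the (then isolated) node $v$ requires care about the direction of each incident edge (edges could point into $v$, out of $v$, or both) and about the fact that Corollary~\ref{cor:addOneEdge} is stated for edges between two nodes both present in the graph — so I would delete the incident edges first (reducing $v$ to an isolated vertex of $\mathcal{G}'$, each deletion changing $|S^*|$ by $0$ or causing an increase of at most $1$), then delete the isolated $v$ itself (which drops $|S^*|$ by exactly $1$), and sum the bookkeeping. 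Getting the $\max$ on the left-hand side then just amounts to noting both lower bounds hold simultaneously.
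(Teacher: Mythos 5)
Your proposal is correct and matches the paper's proof in essence: the paper introduces the intermediate graph $\mathcal{G}''=(V\cup\{v\},L)$ with $v$ isolated (so $S^*_{\mathcal{G}''}=S^*_{\mathcal{G}}\cup\{v\}$) and applies Theorem~\ref{thm:addOneEdge} across the $|N|$ connections to get $|S^*_{\mathcal{G}}|+1-|S^*_{\mathcal{G}'}|\leq|N|$, while the upper bound and the $|M|$ bound are obtained exactly as you do. Your only deviation is traversing the edge-peeling from $\mathcal{G}'$ down to the isolated-node graph rather than building up from it, which is the same two-sided bound of Theorem~\ref{thm:addOneEdge}/Corollary~\ref{cor:addOneEdge} read in the other direction.
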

\begin{proof}
Let $\mathcal{G}''=(V\cup\{v\},L)$, which requires minimum seed set $S^*_{\mathcal{G}''}$ for full influenceability. Then $|S^*_{\mathcal{G}''}|-|S^*_{\mathcal{G}'}|\leq |N|$ by Theorem~\ref{thm:addOneEdge}. Since $v$ does not have any neighbors in $\mathcal{G}''$, $S^*_{\mathcal{G}''}=S^*_{\mathcal{G}}\cup \{v\}$; therefore, $|S^*_{\mathcal{G}}|+1-|S^*_{\mathcal{G}'}|\leq |N|$. In addition, $S^*_{\mathcal{G}}\cup \{v\}$ also achieves full influenceability in $\mathcal{G}'$, and influence deficient nodes must be seeds for full influenceability. Thus, $\max(|M|,|S^*_{\mathcal{G}}|+1-|N|)\leq |S^*_{\mathcal{G}'}|\leq |S^*_{\mathcal{G}}|+1$.
\end{proof}

Adding a node to $\mathcal{G}$, it is trivial that $S^*_{\mathcal{G}'}$ in the resulting graph $\mathcal{G}'$ must include all influence deficient nodes for full influenceability, i.e., $|S^*_{\mathcal{G}'}|\geq |M|$ in Corollary~\ref{cor:addOneNode}. On the other hand, Corollary~\ref{cor:addOneNode} also shows that $|S^*_{\mathcal{G}'}|\geq |S^*_{\mathcal{G}}|+1-|N|$, which is a lower bound depending on the degree of the added node (but independent of $\{b_u\}_{u\in V}$, $\{W_{\overrightarrow{uv}}\}_{\overrightarrow{uv}\in L}$, and the redundancy of edges). We next develop more concrete conditions to characterize the relationships between $S^*_{\mathcal{G}}$ and $S^*_{\mathcal{G}'}$, with the detailed proof in the Appendix.

\begin{theorem}
\label{THM:ADDV}
Given graph $\mathcal{G}=(V,L)$, suppose node $v$ ($v\notin V$) is added and connected to a set of nodes $N$ in $\mathcal{G}$ ($N\subseteq V$, $N=N_\text{in}\cup N_\text{out}$, $N_\text{in}$: all in-neighbors of $v$, $N_\text{out}$: all out-neighbors of $v$), thus forming a new graph $\mathcal{G}'=(V\cup\{v\}, L\cup L' )$, where $L'=\{\overrightarrow{wv}:w\in N_\text{in}\}\cup \{\overrightarrow{vw}:w\in N_\text{out}\}$ and each edge in $L'$ associates with a positive weight. Let $Q$ be the largest subset of $N$ ($Q\subseteq N$) that satisfies the following three conditions.
\begin{enumerate}[(a)]
  \item $\exists$ a minimum seed set $S^*_{\mathcal{G}}$ for full influenceability in $\mathcal{G}$ with $Q\subset S^*_{\mathcal{G}}\subseteq V$;
  \item let $Z:=\{z:z\notin Q,z$ is a neighbor of at least one node in $Q\}$. All edges between $Q$ and $Z$ (in either direction) are redundant in $\mathcal{G}$;
  \item $\forall q\in Q$, $\sum_{z\in Z}W_{\overrightarrow{zq}}<b_q\leq W_{\overrightarrow{vq}}+\sum_{z\in Z}W_{\overrightarrow{zq}}$.
\end{enumerate}
Then regarding the minimum seed set $S^*_{\mathcal{G}'}$ for full influenceability in $\mathcal{G}'$, the following results hold.
\begin{enumerate}[(1)]
  \item If $\sum_{w\in N'_{\text{in}}} W_{\overrightarrow{wv}}<b_v\leq \sum_{w\in N_{\text{in}}} W_{\overrightarrow{wv}}$ (for any $N'_{\text{in}}\subset N_{\text{in}}$), $N_\text{out}\subseteq N_\text{in}$, and $W_{\overrightarrow{vw}}<b_w$ (for any $w\in N_\text{out}$), then the necessary and sufficient condition for $|S^*_{\mathcal{G}'}|<|S^*_{\mathcal{G}}|$ is that $|Q|>1$. Specifically,

$\left\{ \begin{aligned}
         &S^*_{\mathcal{G}'}=(S^*_{\mathcal{G}}\setminus Q)\cup \{v\}, &|Q|>1\\
         &S^*_{\mathcal{G}'}=S^*_{\mathcal{G}}, &|Q|\leq 1
         \end{aligned} \right.$.
  \item If $\sum_{w\in N_{\text{in}}} W_{\overrightarrow{wv}}<b_v$ and $W_{\overrightarrow{vw}}<b_w$ (for any $w\in N_\text{out}$), then the necessary and sufficient condition for $|S^*_{\mathcal{G}'}|<|S^*_{\mathcal{G}}|$ is that $|Q|>1$. Specifically, $S^*_{\mathcal{G}'}=(S^*_{\mathcal{G}}\setminus Q)\cup \{v\}$ for any $|Q|\geq 0$.
\end{enumerate}
\end{theorem}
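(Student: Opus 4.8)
The plan is to establish the two implications of each biconditional separately, using the structural hypotheses (a)--(c) together with Theorem~\ref{thm:addOneEdge} and Corollary~\ref{cor:addOneNode} as the main tools. Throughout, the key intuition to exploit is that conditions (a)--(c) say precisely that $Q$ is a set of ``dependency-critical'' nodes: each $q\in Q$ is influence deficient in $\mathcal{G}$ restricted to the non-$Q$ neighborhood (by the left inequality in (c)), so every $q$ must be a seed in $\mathcal{G}$; but once $v$ is present and active, the single extra weight $W_{\overrightarrow{vq}}$ pushes every $q\in Q$ over its barricade factor (right inequality in (c)), so in $\mathcal{G}'$ none of the nodes of $Q$ needs to be a seed \emph{provided $v$ itself is}. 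Condition (b), redundancy of all $Q$--$Z$ edges in $\mathcal{G}$, is what guarantees that removing the nodes of $Q$ from the seed set does not create any new influence deficiencies elsewhere: the rest of the network was already reaching full influenceability without relying on influence flowing out of $Q$.

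For part (1): first I would show $S^*_{\mathcal{G}}\cup\{v\}$ achieves full influenceability in $\mathcal{G}'$ (immediate, since $\mathcal{G}$ is a subgraph and $v$'s only in-neighbors are in $N_{\text{in}}\subseteq S^*_{\mathcal{G}}$ up to the activation order, so $v$ becomes active), giving $|S^*_{\mathcal{G}'}|\le|S^*_{\mathcal{G}}|+1$; combined with Corollary~\ref{cor:addOneNode}-style reasoning and the fact that $v$ alone (with $N_{\text{in}}$ active) suffices to activate $v$, one gets $|S^*_{\mathcal{G}'}|\le|S^*_{\mathcal{G}}|$. For the ``if'' direction of $|S^*_{\mathcal{G}'}|<|S^*_{\mathcal{G}}|$ when $|Q|>1$: take the seed set $T=(S^*_{\mathcal{G}}\setminus Q)\cup\{v\}$, which has size $|S^*_{\mathcal{G}}|-|Q|+1<|S^*_{\mathcal{G}}|$, and verify it achieves full influenceability in $\mathcal{G}'$ — $v$ becomes active because $b_v\le\sum_{w\in N_{\text{in}}}W_{\overrightarrow{wv}}$ and all of $N_{\text{in}}$ eventually activates (here I must check $N_{\text{in}}$ does not secretly rely on $Q$, which is where (b) and $N_\text{out}\subseteq N_\text{in}$ enter), then each $q\in Q$ activates once $v$ is active by the right inequality in (c) plus the $Z$-contribution, and from that point the activation cascade in $\mathcal{G}'$ dominates the one in $\mathcal{G}$ under $S^*_{\mathcal{G}}$. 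The hypotheses $W_{\overrightarrow{vw}}<b_w$ for $w\in N_\text{out}$ and the strict-subset inequality $\sum_{w\in N'_{\text{in}}}W_{\overrightarrow{wv}}<b_v$ are used to rule out $v$ or its out-edges "helping" in any way that would break the maximality/minimality bookkeeping. For the ``only if'' direction: if $|Q|\le1$ I must show $|S^*_{\mathcal{G}'}|=|S^*_{\mathcal{G}}|$, i.e. no size-$(|S^*_{\mathcal{G}}|-1)$ seed set works in $\mathcal{G}'$; this is the delicate part — I would argue that any optimal $S^*_{\mathcal{G}'}$ can be transformed into a seed set for $\mathcal{G}$ of size $\le|S^*_{\mathcal{G}'}|+[\,v\in S^*_{\mathcal{G}'}\,]\cdot\text{(something)}$, using the \emph{maximality} of $Q$ to say that any node whose role $v$ could usurp must already lie in $Q$, so only one seed (namely a single node of $Q$, if $|Q|=1$) can be saved by introducing $v$.

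Part (2) is the degenerate regime where $v$ can \emph{never} be activated by its in-neighbors ($\sum_{w\in N_{\text{in}}}W_{\overrightarrow{wv}}<b_v$), so $v$ is influence deficient in $\mathcal{G}'$ and hence $v\in S^*_{\mathcal{G}'}$ unconditionally; also $v$'s out-edges are useless ($W_{\overrightarrow{vw}}<b_w$). I would handle it by the same template: $(S^*_{\mathcal{G}}\setminus Q)\cup\{v\}$ is feasible in $\mathcal{G}'$ and — because $v\in S^*_{\mathcal{G}'}$ is forced anyway — comparing sizes gives $|S^*_{\mathcal{G}'}|=|S^*_{\mathcal{G}}|-|Q|+1$, so the claimed equality $S^*_{\mathcal{G}'}=(S^*_{\mathcal{G}}\setminus Q)\cup\{v\}$ holds for every $|Q|\ge0$, and $|S^*_{\mathcal{G}'}|<|S^*_{\mathcal{G}}|$ iff $|Q|-1>0$ iff $|Q|>1$. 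The lower-bound direction again rests on the maximality of $Q$ and the redundancy condition (b): any competitor seed set in $\mathcal{G}'$ that is smaller would, after deleting $v$ and re-seeding the (at most) affected nodes, beat $S^*_{\mathcal{G}}$ unless it saved at least $|Q|$ nodes relative to a $\mathcal{G}$-seed set, and saving more than $|Q|$ would contradict maximality of $Q$.

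\textbf{Main obstacle.} The hard part will be the ``only if'' / lower-bound directions — rigorously showing that $v$ cannot do \emph{more} than absorb the seeding duty of exactly the nodes in $Q$. This requires a careful exchange argument: starting from an arbitrary optimal $S^*_{\mathcal{G}'}$, produce a feasible seed set for $\mathcal{G}$ whose size exceeds $|S^*_{\mathcal{G}'}|$ by at most $\max(|Q|,1)-1$ (part 1) resp. $|Q|-1$ (part 2), invoking Theorem~\ref{thm:addOneEdge}/Corollary~\ref{cor:addOneEdge} to delete $v$'s incident edges one pair at a time while tracking how many extra seeds each deletion costs, and then using the maximality of $Q$ plus condition (b) to show those extra seeds can all be charged to $Q$. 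Verifying that the activation dynamics (a discrete cascade, not a single-shot threshold check) behave monotonically under these surgeries — in particular that no beneficial activation order is lost — is the place where I expect the bulk of the technical care to go, and is presumably why the authors defer the full argument to the Appendix.
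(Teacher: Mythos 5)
Your overall architecture matches the paper's: in Case (1) you note $S^*_{\mathcal{G}}$ remains feasible in $\mathcal{G}'$ so $|S^*_{\mathcal{G}'}|\le|S^*_{\mathcal{G}}|$, in Case (2) you note $v$ is influence deficient and hence forced into $S^*_{\mathcal{G}'}$, and for sufficiency you exhibit $(S^*_{\mathcal{G}}\setminus Q)\cup\{v\}$ as a feasible seed set, exactly as the paper does. The problem is the necessity direction, which you yourself flag as the ``main obstacle'' and then leave as a promise rather than a proof. Worse, the mechanism you propose for it --- deleting $v$'s incident edges one pair at a time and invoking Theorem~\ref{thm:addOneEdge}/Corollary~\ref{cor:addOneEdge} to track the cost --- cannot deliver the required bound. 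Those results only say each edge-pair deletion changes $|S^*_{\cdot}|$ by at most $1$, so iterating over all of $v$'s incident edges accumulates a slack of $|N|$, not $|Q|-1$. To tighten $|N|$ down to $|Q|-1$ you would need to show that almost all of those deletions cost nothing, which is precisely the structural fact you are trying to establish; the charging argument as described is circular in that sense.

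The paper closes this gap by a direct dynamical argument rather than by edge surgery. It first shows (using $N_\text{out}\subseteq N_\text{in}$, the strict-subset inequality on $b_v$, and $W_{\overrightarrow{vw}}<b_w$) that any strictly smaller seed set for $\mathcal{G}'$ must contain $v$, since a non-seed $v$ activates only after all of $N_\text{in}$ and then influences nobody new. It then freezes the cascade of the smaller seed set restricted to $\mathcal{G}$, splits $\mathcal{G}$ into the active part $\mathcal{G}_\text{active}$ and the stalled part $\mathcal{G}_\text{inactive}$, and applies the influence propagation rule (Claim~\ref{clm:InfluencePropRule}) to argue that the seeds $v$ replaces must be nodes that $v$ directly tips over their thresholds given the already-active neighborhood; this forces exactly the inequalities of condition (c) with $Z$ playing the role of the active neighbors, forces the redundancy of condition (b), and hence places those nodes in $Q$ by its maximality, so at least two of them exist only if $|Q|>1$. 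If you want to salvage your proposal, replace the edge-deletion charging scheme with this stall-point analysis; the exchange argument you sketch does not otherwise go through.
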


For graphs $\mathcal{G}=(V,L)$ and $\mathcal{G}'=(V\cup\{v\}, L\cup L' )$, Theorem~\ref{THM:ADDV} states under what conditions and to what extent the size of $S^*_{\mathcal{G}'}$ varies. All these conditions are expressed in terms of a newly defined set $Q$. Intuitively, $Q\subset S^*_{\mathcal{G}}$ contains all seed nodes whose role in influence diffusion can be replaced by a single seed node $v$. Specifically, in condition (a), nodes in $Q$ along with other seed nodes constitute one optimal solution for full influenceability (note that the optimal solution $S^*_{\mathcal{G}}$ is not unique). In condition (b), when $Q\subset S^*_{\mathcal{G}}$, the influence from/to $Q$ on its neighbors (i.e., set $Z$ in Theorem~\ref{THM:ADDV}) can be ignored as all edges between $Q$ and $Z$ are redundant. This suggests that if selecting $v$ as a seed can activate all nodes in $Q$, then nodes in $Q$ do not need to be seeds as seed $v$ and other seed nodes suffice to maintain full influenceability. Finally, to ensure that the impact from $Q$ can be replaced by a single seed node $v$, the inequality in condition (c) should be satisfied. Then based on set $Q$, two cases are considered. In Case (1), the neighbors of $v$ are just sufficient to activate $v$, i.e., no subset of these neighbors is sufficient. In this case, Theorem~\ref{THM:ADDV} states that the size of $S^*_{\mathcal{G}'}$ will not increase as $S^*_{\mathcal{G}}$ also achieves full influenceability in $\mathcal{G}'$, and thus $S^*_{\mathcal{G}'}$ is at most as large as $S^*_{\mathcal{G}}$. Moreover, when $|Q|>1$, $|S^*_{\mathcal{G}'}|$ is reduced, i.e., $S^*_{\mathcal{G}'}=(S^*_{\mathcal{G}}\setminus Q)\cup \{v\}$. While for Case (2), we consider the case where the neighbors of the added node $v$ are no longer sufficient to activate $v$, i.e., $v$ is an influence deficient node, for which we must have $v\in S^*_{\mathcal{G}'}$. Then Theorem~\ref{THM:ADDV} presents a uniform expression of the optimal set $S^*_{\mathcal{G}'}$ under any $Q$. Moreover, due to different attributes of the neighbors of $v$, Theorem~\ref{THM:ADDV} shows that $|S^*_{\mathcal{G}'}|\leq |S^*_{\mathcal{G}}|$ in Case (1), whereas $|S^*_{\mathcal{G}'}|\leq |S^*_{\mathcal{G}}|+1$ in Case (2).

\emph{Discussions:} One uncovered case by Theorem~\ref{THM:ADDV} is that there may exist a subset $N'\subset N$ whose joint impact is sufficient to activate $v$. In this case, when $v$ is active, it can further influence nodes in $N\setminus N'$; then the corresponding $|S^*_{\mathcal{G}'}|$ may be significantly reduced. This is because the connectivity of graph $\mathcal{G}'$ is considerably enhanced when $|N|$ is large,
and thus the influence requirement (barricade factors) can be satisfied more easily. Furthermore, in Theorem~\ref{THM:ADDV}, we have conditions $N_\text{out}\subseteq N_\text{in}$ and $W_{\overrightarrow{vw}}<b_w$ (for any $w\in N_\text{out}$), which essentially are the constraints on the influence of $v$ to other nodes in the network. Without such constraints, $v$ may influence its neighbors directly without the assistance of any other seeds in the network (e.g., when the weight from $v$ to its neighbors is larger than the corresponding barricade factors). Recall that our goal in this section is to identify what minor network changes can incur little/no impact on the optimal seed set for full influenceability. The cases not considered by Theorem~\ref{THM:ADDV} generally cause undesirable changes to the optimal seed set, i.e.,  $S^*_{\mathcal{G}'}$ and $S^*_{\mathcal{G}}$ may not have any relationships. Thus, only cases that can express $S^*_{\mathcal{G}'}$ in terms of $S^*_{\mathcal{G}}$ are considered in Theorem~\ref{THM:ADDV}.\looseness=-1

Besides being of theoretical value, these theorems and corollaries can also guide our algorithm design. Specifically, using these results, we are able to identify which nodes are critical in affecting the optimal seed set for full influenceability. Based on such understanding, we develop efficient seed selection algorithms in the next section.

\section{Seed Set Selection Algorithms}
\label{sect:AlgSeedSetSelection}

In this section, we build efficient algorithms for seed set selection. According to Theorem~\ref{thm:addOneEdge}, Corollaries~\ref{cor:addOneEdge} and \ref{cor:addOneNode}, when two graphs differ by one node $v$, the size difference of their corresponding optimal seed sets for full influenceability can be as large as the number of the neighbors of $v$. Based on this result, Theorem~\ref{THM:ADDV} further shows detailed conditions and expressions to characterize the differences between these two optimal seed sets. Moreover, Theorem~\ref{THM:ADDV} states conditions under which the optimal seed sets for full influenceability can be the same. These observations thus motivate us to remove nodes iteratively with the goal of maintaining the optimal seed set in the remaining graph.
This removal process continues until the objective is accomplished.
Such removal-based method therefore becomes the fundamental principal to our algorithm design for full and partial influenceability. \looseness=-1

\subsection{Seed Selection for Full Influenceability}
\label{subsec:Fullinfluenceability}

\vspace{-.25em}
\begin{algorithm}[tb]
\label{Alg:MSS}
\small
\SetKwInOut{Input}{input}\SetKwInOut{Output}{output}
\Input{Network $\mathcal{G}=(V,L)$, barricade factors $\{b_v\}_{v\in V}$}
\Output{Set of seed nodes $S$}
$U\leftarrow \{u\in \mathcal{V}(\mathcal{G}):b_u\leq \sum_{z\in \mathcal{V}(\mathcal{G})} W_{\overrightarrow{zu}}\}$\tcp*{$\mathcal{V}(\mathcal{G})$: set of nodes in $\mathcal{G}$, $W_{\overrightarrow{zu}}=0$ if $\overrightarrow{zu}$ does not exist\label{mss:identifyU1}}
\While{$|U|>0$\label{mss:removeStart}}
{
\If{$|U|>1$\label{mss:edgeWeightStart}}
{$m_1\leftarrow \min_{u\in U}\big(\sum_{z\in \mathcal{V}(\mathcal{G})} (W_{\overrightarrow{zu}}+W_{\overrightarrow{uz}})\big)$\;
$U\leftarrow \{u\in U: \sum_{z\in \mathcal{V}(\mathcal{G})} (W_{\overrightarrow{zu}}+W_{\overrightarrow{uz}})=m_1\}$\;
}\label{mss:edgeWeightEnd}
\If{$|U|>1$\label{mss:nonTrivialCStart}}
{$m_2\leftarrow \min_{u\in U}$(number of non-trivial connected components in $\mathcal{G}-u$)\tcp*{a connected component containing at least two nodes is \emph{non-trivial}}
$U\leftarrow \{u\in U: \mathcal{G}-u$ has $m_2$ non-trivial connected components$\}$\tcp*{$\mathcal{G}-u$: remove $u$ and all edges (in either direction) incident to $u$ from $\mathcal{G}$}
}\label{mss:nonTrivialCEnd}
\If{$|U|>1$\label{mss:deficientStart}}
{$m_3\leftarrow \min_{u\in U}$(number of influence deficient nodes in $\mathcal{G}-u$)\tcp*{see Definition~\ref{def:deficient}}
$U\leftarrow \{u\in U: \mathcal{G}-u$ has $m_3$ influence deficient nodes$\}$\;
}\label{mss:deficientEnd}
$\mathcal{G}\leftarrow \mathcal{G}-q$\tcp*{$q$: a randomly picked node from $U$\label{mss:removeNode}}
$U\leftarrow \{u\in \mathcal{V}(\mathcal{G}):b_u\leq \sum_{z\in \mathcal{V}(\mathcal{G})} W_{\overrightarrow{zu}}\}$\;\label{mss:identifyU2}
}\label{mss:removeEnd}
$S\leftarrow \mathcal{V}(\mathcal{G})$\tcp*{$\mathcal{V}(\mathcal{G})$: set of nodes in $\mathcal{G}$\label{mss:finalSet}}
\caption{Minimum Seed Selection (MSS)}
\vspace{-.25em}
\end{algorithm}
\normalsize

We note that, depending on the network structure, the conditions in Theorem~\ref{THM:ADDV} are not always easily verifiable, and thus alternative approaches are required. 
In particular, to enable high efficiency of the proposed algorithm, at each iteration in the removal process, we delete the node that least violates the optimal-set-maintaining conditions in Theorem~\ref{THM:ADDV}; this process continues until all remaining nodes are influence deficient nodes (see Definition~\ref{def:deficient}). Based on this rule, we develop Algorithm~\ref{Alg:MSS}, Minimum Seed Selection (MSS), for full influenceability in a given network.

In MSS, lines~\ref{mss:removeStart}--\ref{mss:removeEnd} iteratively remove nodes in the given network. For each removed node, MSS guarantees that it is influenceable if all nodes in the remaining graph are active, while also aiming to minimize its impact on the optimal seed set in the remaining graph. Specifically, at each iteration, MSS first identifies set $U$ containing all non-influence-deficient nodes in line~\ref{mss:identifyU1} (or line~\ref{mss:identifyU2}). For any $u\in U$, $u$ does not need to be a seed node as $u$ can be influenced by the rest of the network (if active); therefore, $u$ can be excluded from the seed set for full influenceability. However, generally $|U|>1$, i.e., there exist ties. According to Theorem~\ref{THM:ADDV}, nodes that do not associate with large edge weights from/to their neighbors may maintain the optimal seed set in the rest of the network. Therefore, lines~\ref{mss:edgeWeightStart}--\ref{mss:edgeWeightEnd} refine $U$ by only keeping the nodes with the minimum impact from/to their neighbors. After this step, if the tie still exists, then lines~\ref{mss:nonTrivialCStart}--\ref{mss:nonTrivialCEnd} further select nodes that incur the minimum number of non-trivial components. Here \emph{non-trivial connected component} after the removal operation refers to a connected component containing at least two nodes, and $\mathcal{G}-u$ denotes removing node $u$ and all edges incident to $u$ from $\mathcal{G}$. Intuitively, when removing a node generates many non-trivial components, a large portion of the original social connections are damaged, thus requiring more seed nodes for full influenceability.
Next, if the tie persists, then lines~\ref{mss:deficientStart}--\ref{mss:deficientEnd} only keep nodes resulting in the minimum number of influence deficient nodes. Finally, a randomly picked node from $U$ (if $|U|>1$) is removed from $\mathcal{G}$ in line~\ref{mss:removeNode}. This removal process is repeated until all nodes in the remaining graph are influence deficient, i.e., $U=\emptyset$; then all remaining nodes form the selected seed set for full influenceability (line~\ref{mss:finalSet}). Note that the output of MSS may not be unique as the node removal sequence is not; the performance of MSS is evaluated in Section~\ref{sect:evaluations}.\looseness=-1

\emph{Correctness of MSS:} MSS can be interpreted as follows. MSS starts by assuming that every node in the given network is a seed node. Then at each iteration, a less-important node that can positively be influenced by the rest of the network is excluded from the seed set; to withdraw this node from consideration for a seed, MSS removes it from the graph and only focuses on seed selection in the rest of the graph. Therefore, at each iteration, we guarantee that if all nodes in the remaining graph are seed nodes, then they can jointly influence all removed nodes in the original graph. Furthermore, let $s$ be the time sequence of how nodes are removed. Then the reverse order of $s$ corresponds to one influence diffusion process starting from the selected seeds. Hence, the seed set selected by MSS can achieve full influenceability.

\emph{Example:} For the sample network in Fig.~\ref{fig:Illustrative_Example}, one possible node removal sequence (recall this sequence is not unique) by MSS is: (i) nodes $v_{19}$, $v_{15}$, $v_{11}$, $v_{16}$, $v_{7}$, $v_{12}$, $v_{17}$, $v_{8}$, $v_{13}$, and $v_{9}$ by the tie breaking rule in lines~\ref{mss:edgeWeightStart}--\ref{mss:edgeWeightEnd}, and (ii) $v_{2}$, $v_{4}$, $v_{6}$, and $v_{14}$ by the tie breaking rule in lines~\ref{mss:nonTrivialCStart}--\ref{mss:deficientEnd}. Then all remaining nodes $S_\mathcal{G}=\{v_{1}, v_{3}, v_{5}, v_{10}, v_{18}\}$ form a seed set for full influenceability. It can be verified that $S_\mathcal{G}$ is optimal as no sets with less number of nodes achieve full influenceability.\looseness=-1

\emph{Complexity:} In MSS, each iteration for removing a non-influence-deficient node takes $O(|U|)$ time as the properties (the impact from/to neighbors in lines~\ref{mss:edgeWeightStart}--\ref{mss:edgeWeightEnd}, the resulting number of non-trivial connected components in lines~\ref{mss:nonTrivialCStart}--\ref{mss:nonTrivialCEnd}, and the resulting number of influence deficient nodes in lines~\ref{mss:deficientStart}--\ref{mss:deficientEnd}) associated with each $u\in U$ can be computed in a constant time. In addition, when updating $U$ after the removal of node $q$ (line~\ref{mss:identifyU2}), only nodes that are neighbors of $q$ (before $q$ is removed) are candidates to be included in $U$, because other nodes have the unchanged sets of neighbors. Therefore, the total time complexity of MSS is $O(cn)=O(n)$ ($n=|V|$ and $c$ is a constant), i.e., linear time complexity.

\vspace{-.25em}
\begin{algorithm}[tb]
\label{Alg:SIM}
\small
\SetKwInOut{Input}{input}\SetKwInOut{Output}{output}
\Input{Network $\mathcal{G}=(V,L)$, barricade factors $\{b_v\}_{v\in V}$, budget $k$}
\Output{Set of seed nodes $S$}
$S\leftarrow$ seed set selected by MSS (Algorithm~\ref{Alg:MSS})\;
\While{$k<|S|$}
{$u=\arg\max_{s\in S} \sigma(S\setminus\{s\})$\tcp*{$\sigma(\cdot)$ is determined by the discrete cascade process as illustrated in Fig.~\ref{fig:Illustrative_Example}\label{sim:minimumImpact}}
$S\leftarrow S\setminus\{u\}$\;\label{sim:removeNode}
}
\caption{Seed selection for Influence Maximization (SIM)}
\vspace{-.25em}
\end{algorithm}
\normalsize

\subsection{Seed Selection for Partial Influenceability}

For full influenceability, Algorithm~\ref{Alg:MSS} removes less-impor\-tant nodes according to the conditions in Theorem~\ref{THM:ADDV} and selects all remaining nodes as seed nodes. Therefore, Algorithm~\ref{Alg:MSS} serves as a tool to identify the critical nodes for influence diffusion. While for partial influenceability with the maximum number of seeds being $k$, we propose Seed selection for Influence Maximization (SIM) to further refine the seed set $S$ generated by MSS by selecting the top-$k$ critical nodes from $S$, as presented in Algorithm~\ref{Alg:SIM}.

In Algorithm~\ref{Alg:SIM}, SIM removes nodes from set $S$ constructed by Algorithm~\ref{Alg:MSS} iteratively until the seed budget $k$ is reached. In particular, at each iteration, the node with the minimum impact (line~\ref{sim:minimumImpact}) on the objective function $\sigma(\cdot)$ is removed from $S$ (line~\ref{sim:removeNode}). When $|S|=k$, SIM returns $S$ as the selected seed set for partial influenceability.

\emph{Complexity:} Since the discrete cascade process can be determined in $O(n)$ time under a given seed set, line~\ref{sim:minimumImpact} in Algorithm~\ref{Alg:SIM} takes $O(n|S|)$ time. Let $m$ denote the size of seed set generated by Algorithm~\ref{Alg:MSS}. Then the total time complexity for Algorithm~\ref{Alg:SIM} is $O(nm^2)$.

\emph{Discussions:} For influence maximization, the well-known\footnote{More efficient algorithms in \cite{Leskovec07KDD,Chen09KDD,Chen10KDD,Jiang11AAAI,Borgs14SODA} rely on the assumption of the objective function being submodular, thus not applicable to our problem.} greedy algorithm \cite{Kempe03KDD} starts with an empty set, and repeatedly selects and adds a node (to this set) that incurs the maximum marginal influence gain of $\sigma(\cdot)$. This process proceeds until the objective (for full influenceability) or the constraint (for partial influenceability under limited budget) is reached. Following similar argument in the complexity analysis of SIM, we know that this greedy algorithm is in $O(n^3)$ complexity. Note that essentially SIM also removes nodes iteratively in a greedy manner; nevertheless, such removal process is conducted on a well-refined and critical set identified by MSS. Moreover, as generally $m\ll n$, SIM experiences a significant improvement over the cubic complexity of a pure greedy algorithm. See Section~\ref{sect:evaluations} for evaluation results, where the above pure greedy algorithm is used as a benchmark.\looseness=-1


\section{Experiments}
\label{sect:evaluations}
To evaluate the performance of MSS and SIM, we conduct
a set of experiments on both randomly-generated and real
Facebook and Twitter networks \cite{snapnets}.
For comparison, we use the greedy algorithm discussed at the end of Section~\ref{sect:AlgSeedSetSelection} as a benchmark.
All algorithms are implemented in Matlab R2010a and
performed on a MacBook Pro with 2.8 GHz Intel Core i7 @ 16 GB memory and OS X EI Capitan.

\subsection{Synthetic Networks}

We first consider synthetic topologies generated according to the widely used Random Geometric (RG) \cite{GuptaKumar99}, Erd\"{o}s-R\'{e}nyi \cite{ErdosRenyi60}, and Random Power Law \cite{Chung06book} graph models. Our motivation for performing evaluations on random networks is that they allow comprehensive evaluation without artifacts of specific network deployments; moreover, the selected graph models provide insights on how topological properties affect MSS and SIM. In this experiment, we obtain similar observations under different graph models; we therefore only report results under the RG model due to page limitations.\looseness=-1

Under the RG model \cite{GuptaKumar99}, nodes are first randomly distributed in a unit square, and then each pair of nodes $u$ and $v$ are connected by edges $\small\overrightarrow{uv}\normalsize$ and $\small\overrightarrow{vu}\normalsize$ if their distance is no larger than a threshold $\theta$.
To focus on evaluating MSS and SIM under different network structural properties and barricade factors, we fix the weight of each edge to $1$ under synthetic networks; the impact under various edge weights will be examined in real Facebook/Twitter networks.

\begin{figure*}[htbp]
  \begin{center}
  \vspace{-.5em}
    \mbox{
      \subfigure[]{\includegraphics[width=0.71\columnwidth]{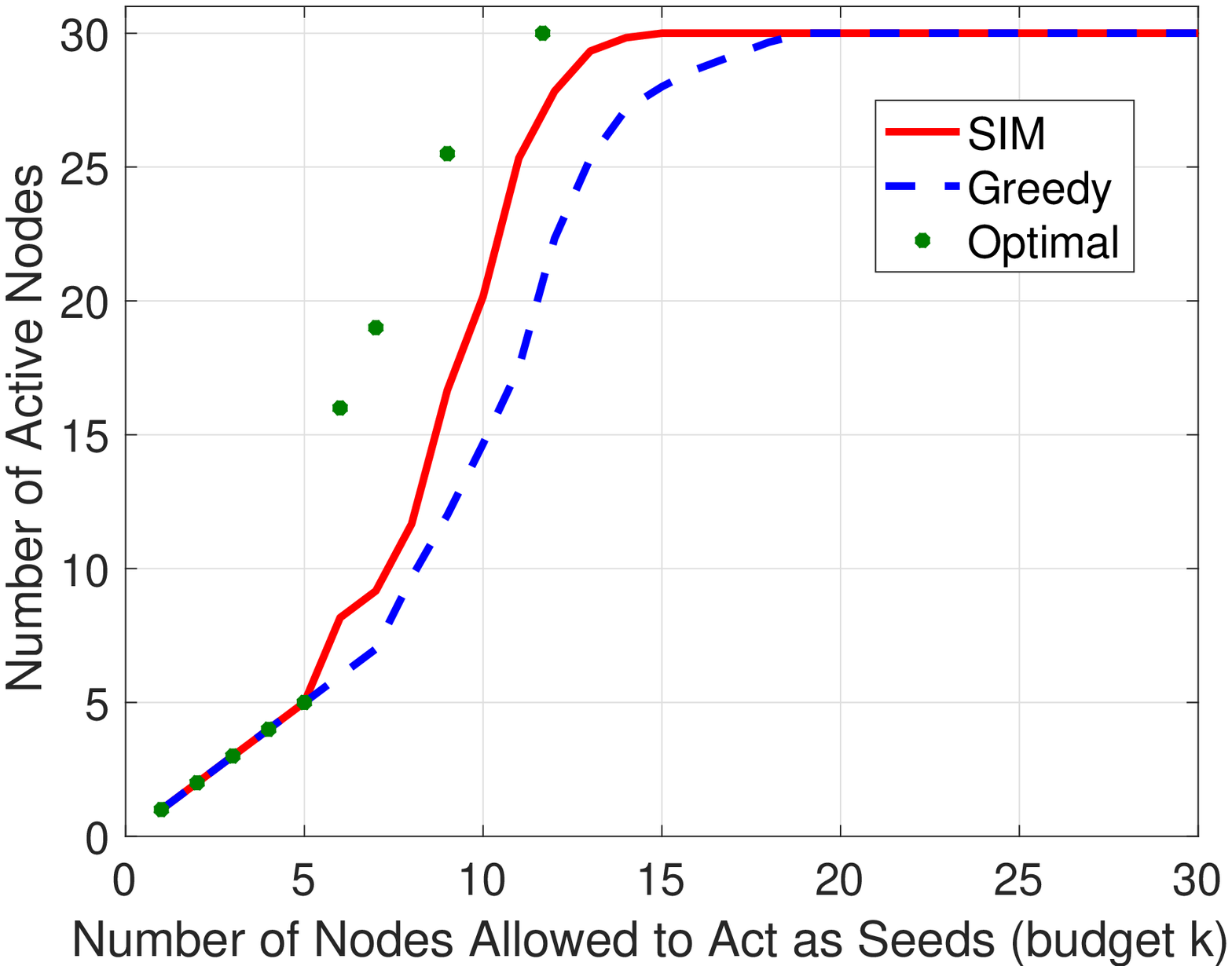}\label{fig:RG30}}
      \subfigure[]{\includegraphics[width=0.71\columnwidth]{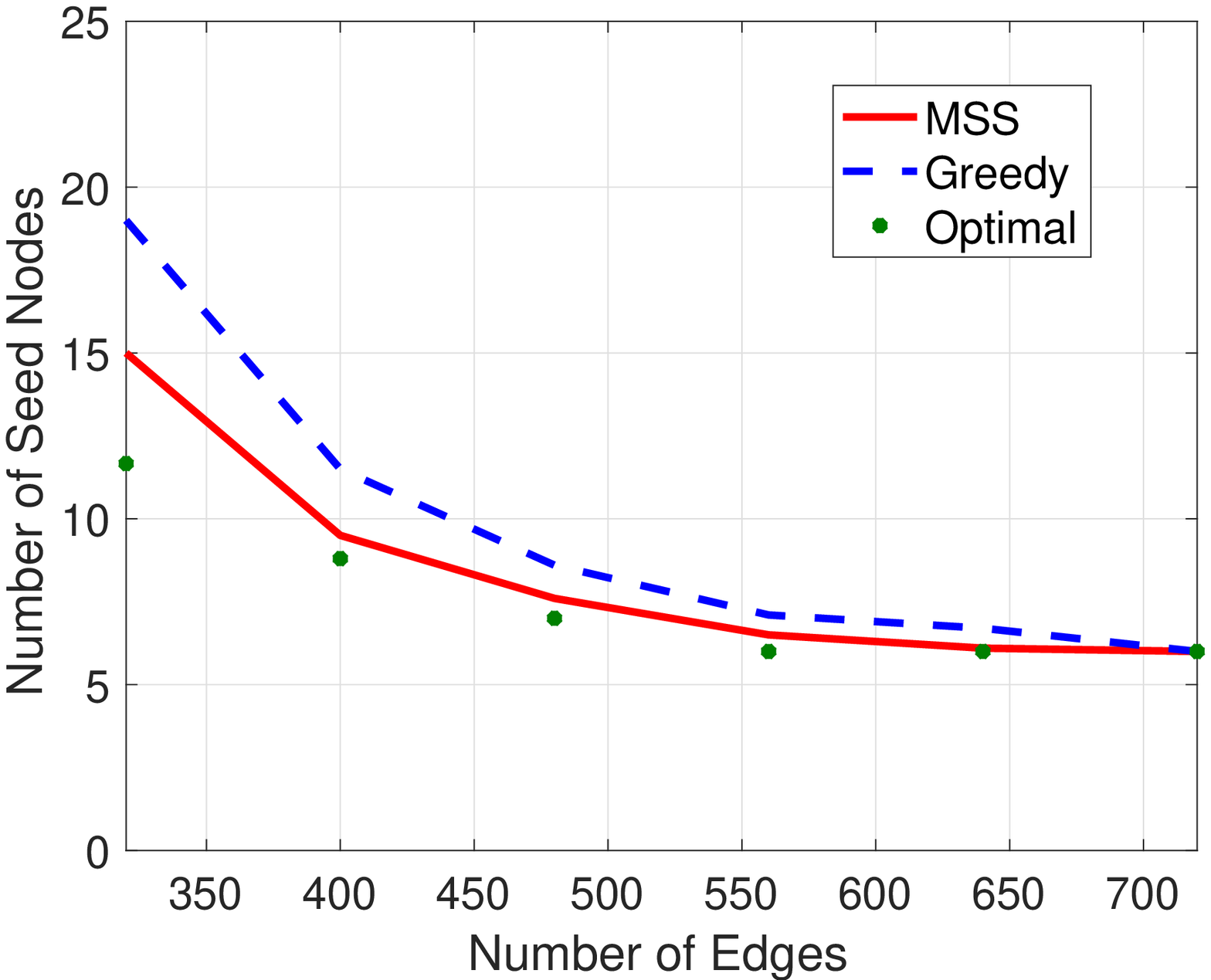}\label{fig:RG30VariableEdges}}
      \subfigure[]{\includegraphics[width=0.71\columnwidth]{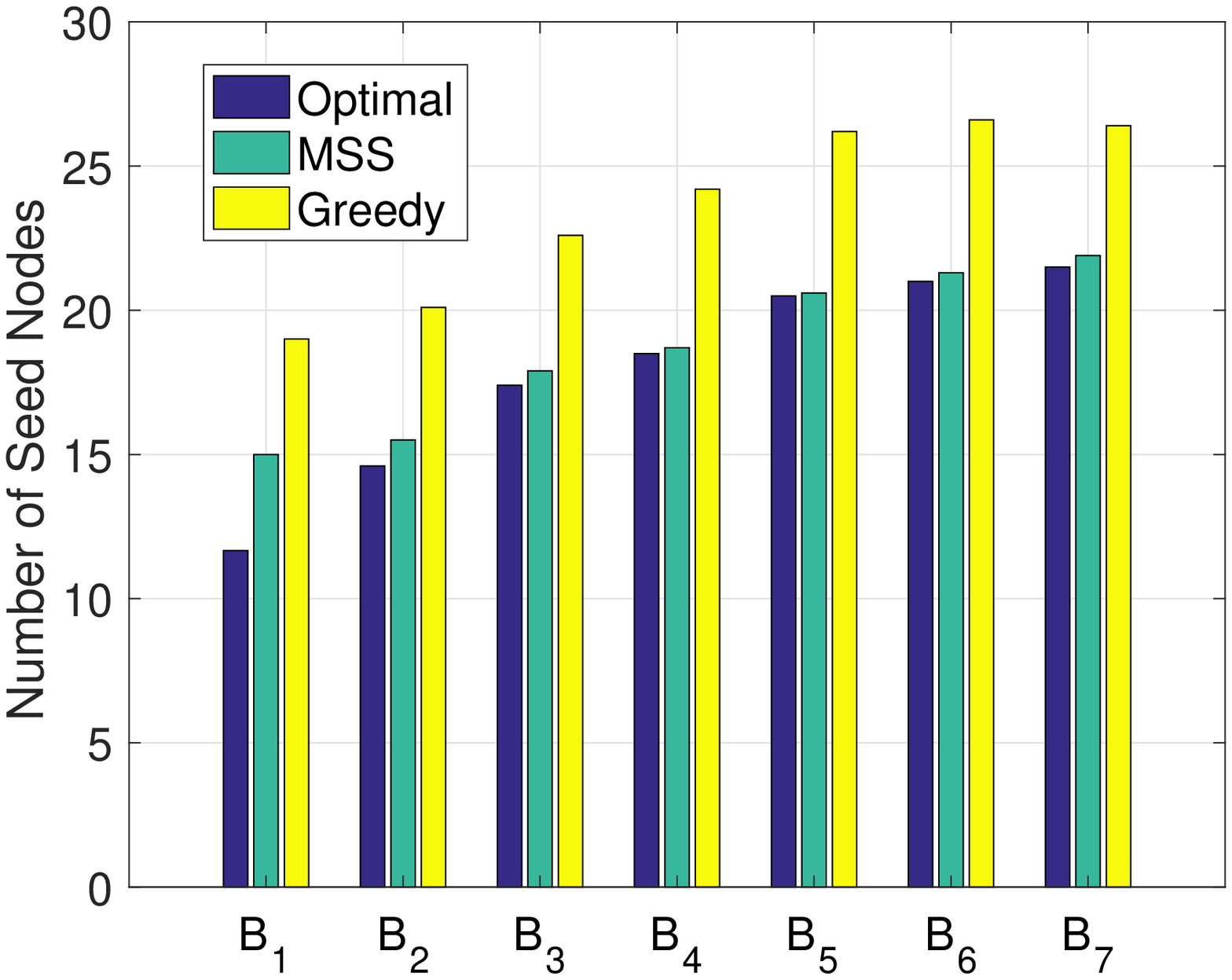}\label{fig:RG30VariableR}}
      }
      \vspace{-1em}
 \caption{Evaluations on Radom Geometric graphs ($|V|=30$, $\forall \protect\overrightarrow{uv}\in L\ W_{\protect\overrightarrow{uv}}=1$, $10$ graph realizations under each parameter setting). (a) Number of achieved active nodes under different budgets: $E[|L|]=320$, $\forall u\in V\ b_u\in [5.33, 10.66]$; (b) Size of the seed set for full influenceability under different $E[|L|]$: $E[|L|]=\{320,400,480,\ldots,720\}$, $\forall u\in V\ b_u\in [5.33, 10.66]$; (c) Size of the seed set for full influenceability under different $\{b_u\}_{u\in V}$: $E[|L|]=320$, $\forall u\in V\ b_u\in [5.33,5.33(0.5i+1.5)]\ $ in $B_i$.}
\label{fig:RG}
  \end{center}
\vspace{-2em}
\end{figure*}


In addition to the greedy algorithm, we also compare our algorithms to the optimal results obtained by enumerating all possible seed sets.
Since it is expensive to compute the optimal results, we randomly generate small RG graph realizations, with each realization containing only $30$ nodes.
The evaluation results averaged over $10$ graph realizations are reported in Fig.~\ref{fig:RG}. In Fig.~\ref{fig:RG}, we first study the selected seed sets under various budgets ($k$). We tune parameter $\theta$ in the RG model to make sure the expected number of (directed) edges is $E[|L|]=320$; furthermore, the barricade factor $b_u$ for each node $u\in V$ is selected from $[{E[|L]}/{(2|V|)},{E[|L]}/{|V|}]$ (i.e., $[5.33,10.66]$) uniformly at random. The corresponding results are shown in Fig.~\ref{fig:RG30}. Fig.~\ref{fig:RG30} illustrates that when the budget $k$ is small, both SIM and the greedy algorithm achieve the optimal value; moreover, $\sigma(S)=k$ when $k\leq 5$. This is because when the selection budget is extremely small, the social influence can hardly propagate, i.e., barricade factors are large. Nevertheless, when $k$ increases, SIM significantly outperforms the greedy benchmark, especially in the case of $k=11$, where SIM almost exhibits a 2-fold improvement. In addition, the optimal result further confirms that the objective function under the barricade model is not submodular. In Fig.~\ref{fig:RG30}, we also note that for a range of $k$ (e.g., $6\leq k\leq 10$),  SIM is unable to accurately approximate the optimal value. Regarding this observation, we argue that SIM is an algorithm that is capable of achieving superior performance over the best-known heuristic (pure greedy solution) while experiencing much smaller complexity. Next, we evaluate the algorithm performance under different network densities (i.e., number of edges to nodes ratio), focusing on seed selection for full influenceability. We adopt the same network parameter settings as those in Fig.~\ref{fig:RG30}, except that $E[|L|]$ is changed to a set of values, i.e., $E[|L|]=\{320,400,480,\ldots,720\}$; the average results are presented in Fig.~\ref{fig:RG30VariableEdges}. In Fig.~\ref{fig:RG30VariableEdges}, it shows that the performance of both MSS and the greedy benchmark improves in networks with high densities, and MSS can even achieves optimality. Furthermore, MSS and the greedy solution converge to the optimal value when the network density is sufficiently high. Intuitively, this is because there exists a large number of optimal seed sets when the network exhibits high density, and thus it is easier for both MSS and the greedy benchmark to find one of them. Finally, we examine how barricade factors may affect the seed selection algorithms. For this goal, let $B_i$ denote $[5.33,5.33(0.5i+1.5)]$ ($i=1,2,\ldots,6$), and barricade factor $b_u$ is selected from $B_i$ uniformly at random for each $i$. We compare the cardinality of the generated seed sets for full influenceability under various $B_i$ ($i=1,2,\ldots,6$), as reported in Fig.~\ref{fig:RG30VariableR}. Fig.~\ref{fig:RG30VariableR} shows that when the average barricade factors are large, MSS accurately approximates the optimal value, owing to the fact that MSS strategically removes nodes unnecessary to serve as seeds. However, the greedy benchmark generates seed sets that are almost $25\%$ more than necessary, thus resulting in a waste of resources.\looseness=-1

\subsection{Facebook and Twitter Networks}

We next use publicly available social network datasets collected by the Stanford SNAP project \cite{snapnets}, from which we select two representative networks, Facebook and Twitter, for algorithm evaluations. For these two selected networks, (i) Facebook is a bidirected graph as the existence of edge $\small\overrightarrow{uv}\normalsize$ directly implies the existence of edge $\small\overrightarrow{vu}\normalsize$, whereas (ii) Twitter is not a bidirected graph as edge $\small\overrightarrow{uv}\normalsize$ does not necessarily imply the existence of edge $\small\overrightarrow{vu}\normalsize$. In this dataset, there are $4,039$ ($107,614$) nodes and $176,468$ ($13,673,453$) directed edges in the Facebook (Twitter) network; the huge sizes of these networks cause extremely long running time in the algorithm evaluation. As such, we randomly sample a smaller subgraph, i.e., subgraph with $500$ ($600$) nodes and $7,280$ ($9,899$) directed edges, from the original Facebook (Twitter) dataset for evaluations. Such subgraph-based influence maximization task also widely exists in real networks. For instance, regarding a product/innovation promotion, the commercial campaigner usually first targets a relatively small geographic region and then identifies the individuals who are most likely to enable vast influence cascades within this region. In this paper, we do not intend to infer edge weights or barricade factors from real datasets; instead, we assume that all these parameters are given and examine all algorithms on top of these network parameters. Since the available data do not include any information about edge weights or barricade factors, we simulate these metrics by randomly generated numbers, and then feed them to the seed selection algorithms.

Using real network data, again we first examine the number of active nodes achieved under various budgets ($k$). In particular, we set edge weights (barricade factors) to be numbers  generated between $1$ and $2$ ($5$ and $10$) uniformly at random. Ten such parameter realizations are generated as the algorithm input; the results averaged over these tests are reported in Fig.~\ref{fig:RealNetworks}. Similar to random graphs, again we observe that SIM outperforms the greedy benchmark by up to $2$-fold (e.g., when $k\approx 25$ in Fig.~\ref{fig:Facebook} and $k\approx 40$ in Fig.~\ref{fig:Twitter}) in both Facebook and Twitter networks. In addition, Fig.~\ref{fig:RealNetworks} demonstrates that the curves generated by SIM are much smoother than those by the benchmark, which partly suggests that our seed selection algorithms are more robust against network parameter changes. Besides, we also report the algorithm average running time for the whole spectrum of $k$, i.e., $k=1,2,\ldots,|V|$, in Fig.~\ref{fig:RealNetworks}. These results show that SIM achieves a roughly $4$-fold speedup, thus confirming the efficiency of SIM.

\begin{figure}[tb]
  \begin{center}
  \vspace{-.5em}
      \subfigure[Facebook ($|V|=500, |L|=7280$
      )]{\includegraphics[width=0.75\columnwidth]{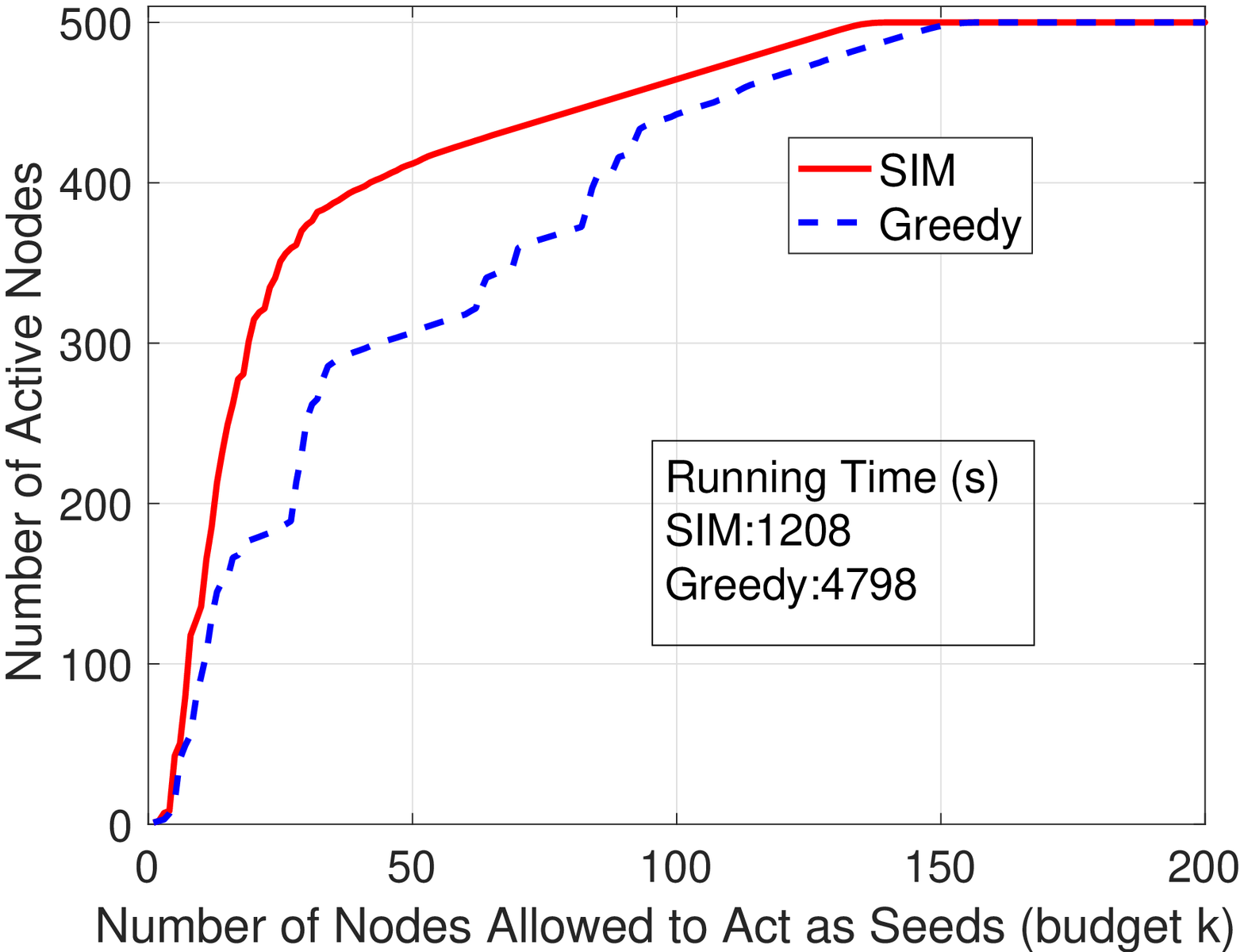}      \label{fig:Facebook}}
      \subfigure[Twitter ($|V|=600, |L|=9899$
      )]{\includegraphics[width=0.75\columnwidth]{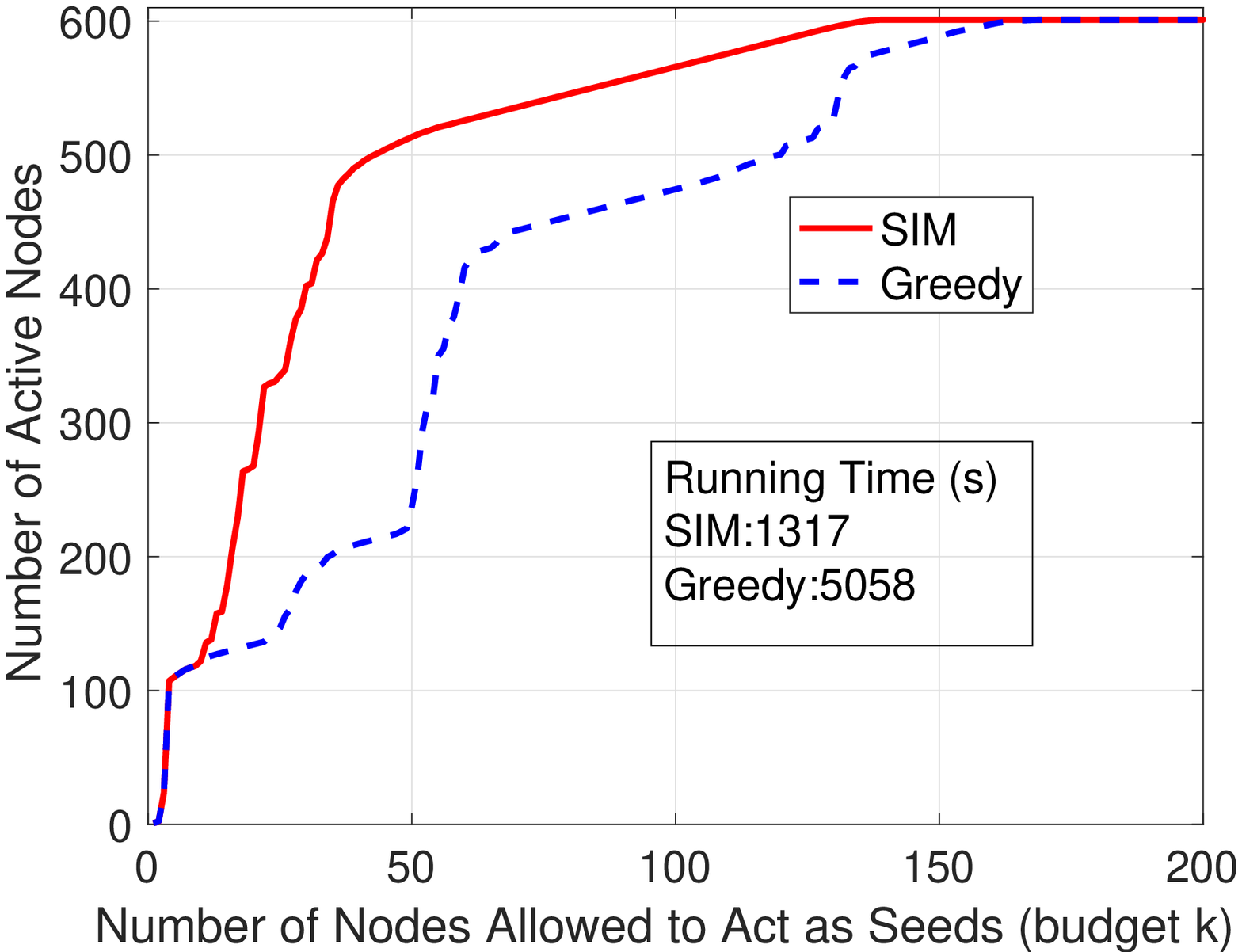}\label{fig:Twitter}}
      \vspace{-1em}
 \caption{Number of achieved active nodes under different budgets in real networks.}
 \vspace{-2em}
\label{fig:RealNetworks}
  \end{center}
\end{figure}

\begin{table}[tb]
\vspace{-.5em}
  \centering
  \caption{Seed Selection in the Facebook/Twitter Network for Full influenceability Under Various Barricade Factors ($\forall u\in V\ b_u\in [5,5(i+1)]\ $ in $B_i$)}\label{t:RealNetworks}
\vspace{-1em}
  \subtable[Facebook Network]{
\begin{tabular}{c|c|c|c|c|c}
\hline
	&	Algorithm	&	$B_1$	&	$B_2$	&	$B_3$	&	$B_4$	\\
\hline											
\#Seed	&	MSS	&	140.0	&	203.0	&	242.8	&	270.8	\\
\cline{2-6}											
Nodes	&	Greedy	&	157.0	&	206.8	&	243.5	&	278.5	\\
\hline											
Running	&	MSS	&	0.9	&	0.6	&	0.5	&	0.4	\\
\cline{2-6}												
Time (s)	&	Greedy	&	4798	&	4597	&	4543	&	4700	\\
\hline																			\end{tabular}
  }
  \subtable[Twitter Network]{
\begin{tabular}{c|c|c|c|c|c}
\hline											
	&	Algorithm	&	$B_1$	&	$B_2$	&	$B_3$	&	$B_4$	\\
\hline											
\#Seed	&	MSS	&	139.0	&	211.0	&	257.0	&	285.3	\\
\cline{2-6}												
Nodes	&	Greedy	&	164.8	&	230.0	&	280.0	&	314.0	\\
\hline											
Running	&	MSS	&	1.3	&	1.0	&	0.9	&	0.8	\\
\cline{2-6}												
Time (s)	&	Greedy	&	5058	&	5180	&	5871	&	5327	\\
\hline																			\end{tabular}
  }
  \vspace{-1em}
\end{table}

Next, we study the algorithm performance under different barricade factors. Similar to Fig.~\ref{fig:RG30VariableR}, we select a set of barricade factors, i.e., $B_i=[5,5(i+1)]$ ($i \in \{1,\ldots,4\}$). Under each $B_i$, $b_u$ of node $u\in V$ has the value selected from $B_i$ uniformly at random, and edge weights are still randomly and uniformly generated between $1$ and $2$. The experiment is repeated $10$ times for each $B_i$; the average results for full influenceability are reported in Table~\ref{t:RealNetworks}. In Table~\ref{t:RealNetworks}, we first observe that MSS outperforms the greedy benchmark in term of the number of selected seed nodes; however, the improvement is not always significant. This is because, as suggested by Fig~\ref{fig:RG30VariableEdges}, when the network density ($E[|L|]/E[|V|]$ is $14.6$ for Facebook and $16.5$ for Twitter) is high, the greedy benchmark is likely to experience improved performance. Besides, we also report the algorithm running time in Table~\ref{t:RealNetworks}, which justifies the momentous advantages of MSS in comparison to the benchmark. In particular, Table~\ref{t:RealNetworks} shows that MSS achieves roughly three orders of magnitude improvement over all ranges of barricade factors for both the Facebook and Twitter networks, which thus confirms the superior efficiency of MSS in finding the seed set for full influenceability.\looseness=-1

In addition, we also perform the same experiment using other randomly sampled subgraphs from the original Facebook and Twitter datasets, and observe similar results. All these evaluation results together certify the efficiency and reliability of our proposed algorithms.

\section{Conclusion}
\label{Sect:Conclusion}

We investigated efficient solutions specifically designed for influence maximization problems under a generic threshold-based non-submodular diffusion model, i.e., influece barricade model. For this issue, we first established a series of theoretical results to relate the optimal seed set to network graphical properties. Employing these results as a basis, we then developed seed selection algorithms to handle non-submodularity in the influence maximization problem by iteratively removing less-critical nodes. Evaluations and comparisons in both synthetic networks and real datasets confirm the efficacy of the proposed graph-based algorithms in identifying a superior set of the most influential nodes as well as the significance in reducing the algorithm execution time.


\appendix
\begin{claim}[Influence Propagation Rule]
\label{clm:InfluencePropRule}
Given the set of active nodes $A_t(S)$ at time $t$ under seed set $S$, nodes can become active at time $t+1$ only if they have at least one in-neighbor in $A_t(S)$, as the influence propagation cannot proceed otherwise.
\end{claim}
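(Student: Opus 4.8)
The plan is to argue the contrapositive directly from the activation rule of the influence barricade model: I would show that an inactive node with \emph{no} active in-neighbor at time $t$ cannot satisfy the activation condition at time $t+1$. Fix a node $u$ that is inactive at time $t$ and suppose $u$ has no in-neighbor in $A_t(S)$. By the model's rule, $u$ would become active at time $t+1$ if and only if $\sum_{v\in A_t(S):\,\overrightarrow{vu}\in L} W_{\overrightarrow{vu}}\ge b_u$; but the left-hand side is an empty sum and therefore equals $0$, so the condition collapses to $b_u\le 0$, which together with $b_u\ge 0$ forces $b_u=0$. Hence no node with strictly positive barricade factor can become newly active without at least one active in-neighbor, which is essentially the content of the claim.

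It then remains to dispose of the degenerate threshold $b_u=0$. Here I would invoke progressiveness of the process, $A_0(S)\subseteq A_1(S)\subseteq\cdots\subseteq A_t(S)$: if $u$ has no in-neighbor in $A_t(S)$, then it has none in $A_0(S)=S$ either, and since its (empty) accumulated weight $0$ already meets the threshold $b_u=0$ at time $0$, the node is active from time $1$ onward and so is not ``becoming active'' at any later step $t+1$. Combined with the standing reading that such always-activatable nodes are effectively part of the influence source, the claim holds for every $t$.

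The only subtlety — and thus the ``hard'' part, though it is a minor one — is precisely this degenerate $b_u=0$ case, where the activation inequality is satisfied vacuously; the statement is not literally true at the very first step unless one appeals to monotonicity of the active sets (or to the convention above). Everything else follows in one line from the definition of the diffusion model, since by construction the accumulated influence that can activate an inactive node is a sum taken only over its active in-neighbors.
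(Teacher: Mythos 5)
Your proof is correct and takes essentially the same route as the paper, which states the claim without a separate proof and treats it as immediate from the activation rule (the accumulated influence is a sum over active in-neighbors only, hence an empty sum equal to $0$ when there are none). Your explicit handling of the degenerate $b_u=0$ case is a genuine refinement the paper omits: such nodes are the only way the literal statement could fail, and your appeal to progressiveness correctly disposes of them for all $t\geq 1$.
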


\emph{Proof of Theorem~\ref{THM:ADDV}:}
Case (1). When $\sum_{w\in N_{\text{in}}} W_{\overrightarrow{wv}}\geq b_v$, $v$ can be influenced by its neighbors in $N_\text{in}$, i.e., $S^*_{\mathcal{G}}$ also achieves full influenceability in $\mathcal{G}'$, and thus $|S^*_{\mathcal{G}'}|\leq |S^*_{\mathcal{G}}|$. Therefore, we consider if $\exists$ set $S^*_{\mathcal{G}'}$ with $|S^*_{\mathcal{G}'}|<|S^*_{\mathcal{G}}|$. Suppose $\exists S$ with $|S|<|S^*_{\mathcal{G}}|$ and $v\notin S$ that achieves full influenceability in $\mathcal{G}'$. Then $v$ becoming active requires the influence from every node in $N_\text{in}$ as $\forall N'_{\text{in}}\subset N_{\text{in}}$, \small$\sum_{w\in N'_{\text{in}}} W_{\overrightarrow{wv}}<b_v\leq \sum_{w\in N_{\text{in}}} W_{\overrightarrow{wv}}$\normalsize; however, when $v$ is active, it can only influence some nodes in $N_\text{in}$ (because $N_\text{out}\subseteq N_\text{in}$) which is not necessary as nodes in $N_\text{in}$ are already active before $v$ is active. Hence, set $S$ can still achieve full influenceability in graph $\mathcal{G}'-v$ (i.e., $\mathcal{G}$), which contradicts the fact that $S^*_{\mathcal{G}}$ ($|S^*_{\mathcal{G}}|>|S|$) is the minimum seed set for full influenceability in $\mathcal{G}$. Therefore, if $\exists S^*_{\mathcal{G}'}$ with $|S^*_{\mathcal{G}'}|<|S^*_{\mathcal{G}}|$, then we must have $v\in S^*_{\mathcal{G}'}$.
Suppose $\exists S^*_{\mathcal{G}'}$ with $|S^*_{\mathcal{G}'}|<|S^*_{\mathcal{G}}|$. Then selecting $|S^*_{\mathcal{G}}|-2$ nodes from $\mathcal{G}$ along with $v$ as seeds can achieve full influenceability in $\mathcal{G}'$. In graph $\mathcal{G}$ (before node $v$ is added), when $|S^*_{\mathcal{G}}|-2$ seeds are selected, there exist at least two inactive nodes.
Hence, graph $\mathcal{G}$ with $|S^*_{\mathcal{G}}|-2$ seeds can be viewed as a combination of subgraph $\mathcal{G}_{\text{active}}$ containing all active nodes and subgraph $\mathcal{G}_{\text{inactive}}$ containing other nodes.
When a seed node $v$ (we know $v$ must be a seed node in $S^*_{\mathcal{G}'}$) is added and connected to some nodes in $\mathcal{G}$ ($\mathcal{G}$ already contains $|S^*_{\mathcal{G}}|-2$ seeds), full influenceability of $\mathcal{G}'$ can be achieved; in other words, two seeds originally in $\mathcal{G}_{\text{inactive}}$ for full influenceability of $\mathcal{G}$, denoted by $x$ and $y$, can now be replaced by a single seed node $v$ in $\mathcal{G}'$. To make sure this can happen, based on the current active nodes in $\mathcal{G}_{\text{active}}$ and $v$, influence must propagate to $x$ and $y$ in the next time slot, i.e., $\exists S^*_{\mathcal{G}}$ with $\{x,y\}\subset S^*_{\mathcal{G}}$, according to the propagation rule (Claim~\ref{clm:InfluencePropRule}). For such influence propagation, we have $\{x,y\}\subseteq U$, $U=\{u\in\mathcal{V}(\mathcal{G}_{\text{inactive}}):$ $u$ has in-neighbors in $\mathcal{G}_{\text{active}}\}$, as $\forall w\in N_\text{out}$ $W_{\overrightarrow{vw}}<b_w$; in addition, \small$\sum_{w\in \mathcal{V}(\mathcal{G}_\text{active})}W_{\overrightarrow{wi}}< b_i\leq W_{\overrightarrow{vi}}+\sum_{w\in \mathcal{V}(\mathcal{G}_\text{active})}W_{\overrightarrow{wi}}$\normalsize\ ($i=x,y$)
must hold as well. Finally, when $\{x,y\}\subset S^*_{\mathcal{G}}$, it is obvious that edges between $\{x,y\}$ and $\mathcal{G}_{\text{active}}$ are redundant. In sum, $x$ and $y$ satisfy conditions (a--c) in Theorem~\ref{THM:ADDV}. Therefore, $|S^*_{\mathcal{G}'}|<|S^*_{\mathcal{G}}|$ only if $v$ connects to nodes satisfying conditions (a--c), i.e., set $Q$ in Theorem~\ref{THM:ADDV} is not empty. Next, when $|Q|=1$, $v$ can only replace one seed node in $\mathcal{G}$, and thus the size of $S^*_{\mathcal{G}'}$ does not change. Hence, $|S^*_{\mathcal{G}'}|<|S^*_{\mathcal{G}}|$ only if $|Q|>1$. For the sufficiency part, it is obvious that, when $|Q|>1$, all nodes in $Q$ acting as seeds in $S^*_{\mathcal{G}}$ can be replaced by $v$ in $\mathcal{G}'$, i.e., $S^*_{\mathcal{G}'}=(S^*_{\mathcal{G}}\setminus Q)\cup \{v\}$. However, when $|Q|\leq 1$, $S^*_{\mathcal{G}'}=S^*_{\mathcal{G}}$ as $S^*_{\mathcal{G}}$ always achieves full influenceability in $\mathcal{G}'$.

Case (2). When $\sum_{w\in N_{\text{in}}} W_{\overrightarrow{wv}}<b_v$, we have $v\in S^*_{\mathcal{G}'}$ for full influenceability in $\mathcal{G}'$. Then since $\forall w\in N_\text{out}$ $W_{\overrightarrow{vw}}<b_w$, we can follow the same argument in Case~(1) to conclude that $|S^*_{\mathcal{G}'}|<|S^*_{\mathcal{G}}|$ only if $v$ connects to nodes satisfying conditions (a--c), i.e., set $Q$ in Theorem~\ref{THM:ADDV} is not empty. Since $v$ itself must be a seed node, we have $S^*_{\mathcal{G}'}=(S^*_{\mathcal{G}}\setminus Q)\cup \{v\}$ for any $Q$; however, $|S^*_{\mathcal{G}'}|<|S^*_{\mathcal{G}}|$ if and only if $|Q|>1$. \hfill$\blacksquare$

\bibliographystyle{IEEEtran}
\bibliography{mybibSimplifiedA}

\end{document}